\relax
\documentclass[letterpaper]{article}
\usepackage{style/aaai20}
\usepackage{times}
\usepackage{helvet}
\usepackage{courier}
\usepackage{url}
\usepackage{graphicx}
\frenchspacing
\setlength{\pdfpagewidth}{8.5in}
\setlength{\pdfpageheight}{11in}
\setcounter{secnumdepth}{0} 

 \pdfinfo{
/Title (Checking Chase Termination over Ontologies of Existential Rules with Equality)
/Author (David Carral, Jacopo Urbani)
}

\setlength\titlebox{2.5in}


\usepackage{adjustbox}
\usepackage{algorithmicx}
\usepackage{amsmath}
\usepackage{amsthm}
\usepackage{comment}
\usepackage{diagbox}
\usepackage[shortlabels]{enumitem}
\usepackage{multirow}
\usepackage{thm-restate}
\usepackage[colorinlistoftodos,prependcaption,textsize=tiny]{todonotes}
\usepackage{xspace}

\newtheorem{definition}{Definition}
\newtheorem{example}{Example}
\newtheorem{lemma}{Lemma}
\newtheorem{theorem}{Theorem}
\newtheorem{proposition}{Fact}
\newtheorem{results}{Summary}


\newcommand{\Tuple}[1]{\ensuremath{\langle #1 \rangle}\xspace}

\newcommand{\FirstItem}{(a)\xspace}
\newcommand{\SecondItem}{(b)\xspace}
\newcommand{\ThirdItem}{(c)\xspace}
\newcommand{\Vector}[1]{\ensuremath{\vec{#1}}\xspace}
\newcommand{\Vx}{\Vector{x}}
\newcommand{\Vy}{\Vector{y}}

\newcommand{\Vw}{\Vector{w}}

\newcommand{\Vt}{\Vector{t}}
\newcommand{\Vu}{\Vector{u}}

\newcommand{\ShortDots}{\hbox to 1em{.\hss.\hss.}}


\newcommand{\DoubleExpTime}{\textsc{2ExpTime}\xspace}


\newcommand{\FormatEntitySet}[1]{\ensuremath{\mathbf{#1}}\xspace}
\newcommand{\Constants}{\FormatEntitySet{C}}
\newcommand{\Variables}{\FormatEntitySet{V}}
\newcommand{\Functions}{\FormatEntitySet{F}}
\newcommand{\Predicates}{\FormatEntitySet{P}}
\newcommand{\Terms}{\FormatEntitySet{T}}
\newcommand{\GroundTerms}{\FormatEntitySet{GT}}
\newcommand{\Entities}{\FormatEntitySet{E}}
\newcommand{\EntitiesIn}[2]{\ensuremath{#1_{#2}}\xspace}
\newcommand{\Arity}{\ensuremath{\textit{ar}}\xspace}
\newcommand{\Formula}{\ensuremath{\upsilon}\xspace}

\newcommand{\Fact}{\ensuremath{\varphi}\xspace}
\newcommand{\Rule}{\ensuremath{\rho}\xspace}
\newcommand{\RuleAux}{\ensuremath{\mu}\xspace}
\newcommand{\Body}{\ensuremath{\beta}\xspace}
\newcommand{\Head}{\ensuremath{\eta}\xspace}
\newcommand{\BCQ}{\ensuremath{\gamma}\xspace}
\newcommand{\FormatFormulaSet}[1]{\ensuremath{\mathcal{#1}}\xspace}
\newcommand{\FormulaSet}{\FormatFormulaSet{U}}
\newcommand{\FactSet}{\FormatFormulaSet{F}}
\newcommand{\FactSetAux}{\FormatFormulaSet{G}}
\newcommand{\AtomSet}{\FormatFormulaSet{A}}
\newcommand{\AtomSetAux}{\FormatFormulaSet{B}}

\newcommand{\RuleSet}{\FormatFormulaSet{R}}

\newcommand{\Ontology}{\FormatFormulaSet{O}}
\newcommand{\On}{\Ontology}
\newcommand{\Sk}[1]{\textit{sk}(#1)}
\newcommand{\SkApp}[2]{\ensuremath{\textit{sk}_{#1}(#2)}}
\newcommand{\Max}[1]{\ensuremath{\textit{max}(#1)}\xspace}
\newcommand{\Depth}[1]{\ensuremath{\textit{dep}(#1)}\xspace}

\newcommand{\Substitution}{\ensuremath{\sigma}\xspace}
\newcommand{\Subs}{\Substitution}
\newcommand{\SubstitutionAux}{\ensuremath{\tau}\xspace}
\newcommand{\SubsAux}{\SubstitutionAux}

\newcommand{\Chase}{\ensuremath{\mathcal{C}}\xspace}

\newcommand{\Rewriting}{\ensuremath{\pi}\xspace}

\newcommand{\Application}[3]{\ensuremath{#1_{#2, #3}}\xspace}

\newcommand{\EqualityAxiomatisationPredicate}{\textit{Eq}\xspace}
\newcommand{\EP}{\EqualityAxiomatisationPredicate}

\newcommand{\Sing}[1]{\ensuremath{\textsf{Sg}(#1)}}
\newcommand{\St}[1]{\ensuremath{\textsf{St}(#1)}}

\newcommand{\CriticalInstance}[1]{\ensuremath{\FactSet^\star_{#1}}\xspace}
\newcommand{\CI}[1]{\CriticalInstance{#1}}
\newcommand{\EMFASet}[1]{\ensuremath{\textsf{E}_{#1}}\xspace}
\newcommand{\MFA}{\text{MFA}\xspace}
\newcommand{\EMFA}{\text{EMFA}\xspace}
\newcommand{\RMFA}{\text{RMFA}\xspace}


\newcommand{\EPComplete}{\EP-complete\xspace}

\newcommand{\ClaimSing}{I}
\newcommand{\ClaimStandard}{II}
\newcommand{\ClaimEMFAGenerality}{III}
\newcommand{\ClaimEMFAPerformance}{IV}

\includecomment{tr}
\excludecomment{paper}

\title{Checking Chase Termination over Ontologies of Existential Rules with Equality}
\author{David Carral \\
Institute for Theoretical Computer Science \\
Technische Universit\"at Dresden, Germany \\
david.carral@tu-dresden.de
\And
Jacopo Urbani \\
Department of Computer Science \\
Vrije Universiteit Amsterdam, The Netherlands \\ 
acopo@cs.vu.nl}

\begin{document}

\maketitle

\begin{abstract}
The chase is a sound and complete algorithm for conjunctive query answering over ontologies of existential rules with equality.
To enable its effective use, we can apply acyclicity notions; that is, sufficient conditions that guarantee chase termination.
Unfortunately, most of these notions have only been defined for existential rule sets without equality.
A proposed solution to circumvent this issue is to treat equality as an ordinary predicate with an explicit axiomatisation.
We empirically show that this solution is not efficient in practice and propose an alternative approach.
More precisely, we show that, if the chase terminates for any equality axiomatisation of an ontology, then it terminates for the original ontology (which may contain equality).
Therefore, one can apply existing acyclicity notions to check chase termination over an axiomatisation of an ontology and then use the original ontology for reasoning.
We show that, in practice, doing so results in a more efficient reasoning procedure.
Furthermore, we present equality model-faithful acyclicity, a general acyclicity notion that can be directly applied to ontologies with equality.
\end{abstract}

\section{Introduction}
\label{section:introduction}

Answering conjunctive queries (CQs) over ontologies of existential rules with equality is a relevant reasoning task, which is undecidable \cite{cq-undecidable-erules}.
One approach to solve it in some cases is to use the \emph{chase} \cite{old-school-chase}---a forward-chaining algorithm, which is sound and complete but may not terminate.
Despite the fact that checking chase termination is undecidable \cite{undecidable-chase-termination,anatomy-chase}, we can apply \emph{acyclity notions}---sufficient conditions that guarantee termination---to enable the effective use of the chase for a large subset of real-world ontologies \cite{mfa,rmfa}.

Acyclicity notions have been widely researched and many such criteria have been developed \cite{wa,swa,graph-rule-dependencies,ja,rca,rmfa,k-safe}.
Alas, some of the most general notions, such as model-faithful acyclicity (MFA) \cite{mfa}, are only defined for existential rule sets without equality.
This restriction limits their usefulness, since equality is a prevalent feature (for instance, equality is used in $\mathtt{\sim}34\%$ of the logical theories in the Oxford Ontology Repository\footnote{\url{www.cs.ox.ac.uk/isg/ontologies/}}).

A proposed solution to enable the use of existing acyclicity notions over ontologies with equality is to treat equality as an ordinary predicate with an explicit axiomatisation (see Sections 2.1 and 5 of \cite{mfa}).
Intuitively, an axiomatisation of a rule set \RuleSet is another rule set that does not contain equality and can be exploited to solve CQ answering over ontologies with the rule set \RuleSet.
More precisely, using axiomatisations, we can solve CQ answering over an ontology $\On = \Tuple{\RuleSet, \FactSet}$, where \RuleSet is an existential rule set (possibly containing equality) and \FactSet is a fact set, by implementing the following step-by-step approach:
\begin{enumerate}
\item Compute some equality axiomatisation $\RuleSet'$ of $\RuleSet$. \label{step:axiomatisation}
\item Verify whether $\RuleSet'$ is acyclic with respect to some acyclicity notion (e.g., \MFA). \label{step:acyclicity}
If this is the case, then the chase of $\RuleSet'$ terminates; that is, for any given fact set $\FactSet'$, the chase terminates on input $\Tuple{\RuleSet', \FactSet'}$.
\item Apply the chase on $\Tuple{\RuleSet', \FactSet}$, and use the resulting output fact set to solve CQ answering over \On. \label{step:chase}
\end{enumerate}
Note that, since $\RuleSet'$ is an axiomatisation of \RuleSet, $\RuleSet'$ is equality-free and hence, we can check if this rule set is \MFA in Step~\ref{step:acyclicity}.

The application of the above step-by-step approach to real-world ontologies is somewhat problematic.
For instance, the use of the \emph{standard axiomatisation} in Step~\ref{step:axiomatisation} often causes the \MFA check applied in Step~\ref{step:acyclicity} to fail \cite{mfa}.
As shown in this paper, the use of other axiomatisation techniques in Step~\ref{step:axiomatisation}, such as \emph{singularisation} \cite{swa}, fixes this issue to a large extent.
Unfortunately, computing the chase of an ontology that features some singularisation of a rule set---as required in Step~\ref{step:chase}---is not efficient in practice.
The fact that the use of axiomatisations leads to poor performance has previously been shown for the standard axiomatisation \cite{rewriting}; we show that it is also the case when singularisation is applied.

To address these issues, we show that, if the chase of any equality axiomatisation of \RuleSet terminates, then so does the chase of the rule set \RuleSet.
Hence, we can replace Step~\ref{step:chase} in the above step-by-step procedure with the following alternative:
\begin{enumerate}
\setcounter{enumi}{3}
\item Compute the chase on input \On and use the resulting output fact set to solve CQ answering over \On. \label{step:chase-alternative}
\end{enumerate}
Implementing Step~\ref{step:chase-alternative} instead of Step~\ref{step:chase} enables the use of \emph{rewriting} to deal with equality when computing the chase; a technique that has already been proven more efficient than the use of axiomatisations in practice \cite{rewriting}.

Still, there is yet another practical problem.
Namely, checking if the singularisation of a rule set is \MFA---as required in Step~\ref{step:acyclicity}---is somewhat inefficient for many real-world rule sets.
To solve this issue, we present \emph{equality model-faithful acyclicity} (\EMFA), a very general acyclicity notion based on \MFA that can be directly applied to rule sets with equality.
By applying this notion directly, we altogether remove the need for using equality axiomatisations.

Our contributions are as follows:
we provably show that, if the chase of the standard axiomatisation or any singularisation of a rule set \RuleSet terminates, then so does the chase of \RuleSet; we show that the converses of the previous implications do not hold; and we define \EMFA, we study the complexity of checking \EMFA membership and reasoning over \EMFA ontologies, and we compare the expressivity of this notion with that of \MFA.
Moreover, we empirically show that
\begin{enumerate}[a.]
\item[\ClaimSing.] computing the chase of an ontology featuring some singularisation of a rule set is not efficient in practice,
\item[\ClaimStandard.] the standard equality axiomatisation of a large subset of real-world rule sets is not \MFA,
\item[\ClaimEMFAGenerality.] \EMFA is as general as ``\MFA{} plus singularisation'', and
\item[\ClaimEMFAPerformance.] checking if a rule set is \EMFA is more efficient than checking if it is ``\MFA{} plus singularisation''.
\end{enumerate}
\begin{paper}
Our treatment is fully self-contained, but a technical report with further information can be consulted if desired \cite{technical-report}.
\end{paper}
\begin{tr}
Formal proofs for all technical results are included in the appendix of this document.
\end{tr}

\section{Preliminaries}
\label{section:preliminaries}

\subsection{Syntax and Semantics}

We consider a signature based on mutually disjoint, finite sets of \emph{constants} \Constants, \emph{function symbols} \Functions, \emph{variables} \Variables, and \emph{predicates} \Predicates.
Every entity $e \in \Functions \cup \Predicates$ is associated with some arity $\Arity(e) \geq 1$.
The set \Predicates includes the special binary predicate $\mathop{\approx}$, which is referred to as the \emph{equality predicate} or simply as \emph{equality}.
The set of \emph{terms} $\Terms$ is the minimal superset of \Constants and \Variables such that, for all $f \in \Functions$ and all $t_1, \ldots, t_{\Arity(f)} \in \Terms$, we have that $f(t_1, \ldots, t_{\Arity(f)}) \in \Terms$.
The set of \emph{ground terms} $\GroundTerms$ is the set of all terms without syntactic occurrences of a variable.
For a term $t$, let $\Depth{t} = 1$ if $t \in \Constants \cup \Variables$, and $\Depth{t} = \Max{\Depth{t_1}, \ldots, \Depth{t_n}} + 1$ if $t$ is of the form $f(t_1, \ldots, t_n)$.
Given an entity set \Entities and a formula or set thereof \FormulaSet, we write \EntitiesIn{\Entities}{\FormulaSet} to denote the set that contains all of the elements in \Entities that occur in \FormulaSet.
We abbreviate lists of terms $t_1, \ldots, t_n$ as $\vec{t}$ and treat these as sets when order is irrelevant.
An \emph{atom} is a formula $P(\vec{t})$ with $P \in \Predicates$, $\vec{t} \in \Terms$, and $\Arity(P) = \vert \vec{t} \vert$.
As customary, we write $t \approx u$ instead of $\mathop{\approx}(t, u)$ to denote atoms defined over equality.

A \emph{fact} is an atom $P(\Vt)$ with $\Vt \in \Constants$.
For a formula \Formula and a list of variables \Vx, we write $\Formula[\Vx]$ to indicate that \Vx is the set of all free variables occurring in \Formula (i.e., the set of all variables that are not quantified in \Formula).
An \emph{(existential) rule} is a function- and constant-free first-order logic (FOL) formula of either of the following forms.
\begin{align}
\forall \Vx, \Vy . \big(\Body[\Vx, \Vy] &\to \exists \Vw . \Head[\Vx, \Vw]\big) \label{rule:tgd} \\
\forall \Vx . \big(\Body[\Vx] &\to x \approx y\big) \label{rule:egd}
\end{align}
In the above, \Vx, \Vy, and \Vw are pairwise disjoint lists of variables; \Vx is non-empty; \Body and \Head are non-empty conjunctions of atoms without equality; and $x, y \in \Vx$.
The \emph{body} (resp. \emph{head}) of a rule is the conjunction of atoms to the left (resp. right) of its implication symbol.
We omit universal quantifiers when writing rules and treat conjunctions of atoms, such as \Body and \Head above, as atom sets.
We refer to rules of the form \eqref{rule:tgd} and \eqref{rule:egd} as \emph{tuple generating dependencies} (TGDs) and \emph{equality generating dependencies} (EGDs), respectively.

A \emph{boolean conjunctive query} (BCQ) is a function-free FOL formula $\BCQ = \exists \Vx . \Body[\Vx]$ with \Body a non-empty conjunction of atoms that, without loss of generality, does not contain constants or equality.
We refer to \Body as the \emph{body} of \BCQ.
Since CQ answering can be reduced to BCQ entailment, we confine our attention to the latter without loss of generality.

We consider finite rule sets \RuleSet, where we assume without loss of generality that existentially quantified variables do not reoccur across different rules ($\dagger$).
An \emph{ontology} \On is a tuple \Tuple{\RuleSet, \FactSet} with \RuleSet a rule set and \FactSet a fact set.
Without loss of generality, we assume that, for an ontology \Tuple{\RuleSet, \FactSet}, the set \FactSet is equality-free and $\EntitiesIn{\Predicates}{\FactSet} \subseteq \EntitiesIn{\Predicates}{\RuleSet}$.

For an ontology \On and a BCQ \BCQ, we write $\On \models \BCQ$ to indicate that \On entails \BCQ under FOL semantics.
That is, to indicate that $\bigwedge_{\Rule \in \RuleSet} \Rule \wedge \bigwedge_{\Fact \in \FactSet} \Fact$ entails \BCQ.

\subsection{The Non-Oblivious Chase Algorithm}

We present the \emph{(non-oblivious) chase} \cite{wa}---a chase variant that expands existential quantifiers only if necessary, and merges terms to comply with the semantics of equality.
Unlike Fagin et al. \shortcite{wa}, we do not contemplate the \emph{unique name assumption} and introduce Skolem functional terms instead of ``unlabelled'' nulls to satisfy existential restrictions.
The use of ``labelled'' Skolem terms simplifies some the formal arguments presented in the following sections (e.g., see the proofs of Theorems~\ref{theorem:standard}, \ref{theorem:singularisation}, and \ref{theorem:reasoning-membership}).

\begin{definition}[Skolemisation]
The \emph{skolemisation} \Sk{\Rule} of a TGD \Rule of the form \eqref{rule:tgd} is the formula $\Body \to \Sk{\Head}$ where $\Sk{\Head}$ is the conjunction of atoms obtained from \Head by replacing each variable $w \in \Vw$ by the Skolem term $f_w(\Vx)$, with $f_w$ a fresh Skolem function symbol of arity $\vert \Vx \vert$ unique for $w$.
\end{definition}

By ($\dagger$), an existentially quantified variable $w$ in some rule set \RuleSet may occur in at most one rule $\Rule \in \RuleSet$ and hence, the function symbol $f_w$ is unique for both $w$ and $\Rule$ in \Sk{\RuleSet}.


We write $[t_1 / u_1, \ldots, t_n / u_n]$ to denote the function over \Terms mapping $t_i$ to $u_i$ for all $i \in \{1, \ldots, n\}$.
Given such a function $\Subs$ and a formula \Formula, let $\Formula\Subs$ be the formula that results from replacing every occurrence of a term $t$ as a predicate argument in an atom in \Formula with $\Subs(t)$ if the latter is defined  (e.g., $P(t, f(t))[t / u] = P(u, f(t))$).
A \emph{rewriting} (resp. \emph{substitution}) is a functions from \GroundTerms (resp. \Variables) to \GroundTerms.

To guide the renaming that results from applying EGDs, we introduce a strict total order $\prec$ defined over the set of terms such that $t \prec u$ for all $t, u \in \Terms$ with $\Depth{t} < \Depth{u}$.

\begin{definition}[Rule Applicability]
\label{definition:applicability}
Consider a rule \Rule, a substitution \Subs, and an atom set \AtomSet.
Then, the tuple \Tuple{\Rule, \Subs} is \emph{applicable} to \AtomSet if all of the following hold.
\begin{itemize}
\item \Subs is defined and undefined for all of the universally and existentially quantified variables in \Rule, respectively.
\item $\Body\Subs \subseteq \AtomSet$ with \Body the body of \Rule.
\item If \Rule is of the form \eqref{rule:tgd}, then $\Head\SubsAux \not \subseteq \AtomSet$ for all $\SubsAux \supseteq \Subs$.
\item If \Rule is of the form \eqref{rule:egd}, then $\Subs(x) \neq \Subs(y)$.
\end{itemize}

If \Tuple{\Rule, \Subs} is applicable to \AtomSet, then the \emph{application} of \Tuple{\Rule, \Subs} on \AtomSet, denoted with $\Application{\AtomSet}{\Rule}{\Subs}$, is the atom set defined as follows.
\begin{itemize}
\item If \Rule is of the form \eqref{rule:tgd}, then $\Application{\AtomSet}{\Rule}{\Subs} = \AtomSet \cup \SkApp{\Subs}{\Head}$ where \SkApp{\Subs}{\Head} is the formula that results from replacing every syntactic occurrence of a variable $x$ in \Sk{\Head} with $\Subs(x)$.
\item If \Rule is of the form \eqref{rule:egd}; then $\Application{\AtomSet}{\Rule}{\Subs} = \AtomSet [\Subs(y) / \Subs(x)]$ if $\Subs(x) \prec \Subs(y)$, and $\Application{\AtomSet}{\Rule}{\Subs} = \AtomSet [\Subs(x) / \Subs(y)]$ otherwise.
\end{itemize}
\end{definition}

The case distinction in the last item in the previous definition ensures that we merge ``deeper'' terms into ``shallower'' ones when applying EGDs.
This strategy simplifies some of our formal arguments (e.g., see the proof of Theorem~\ref{theorem:bcq-membership}), but it is not necessary to define a correct procedure.

\begin{definition}[The Chase Algorithm]
\label{definition:chase}
A \emph{chase sequence} of an ontology $\On = \Tuple{\RuleSet, \FactSet}$ is a (possibly finite) sequence $\AtomSet_0, \AtomSet_1, \ldots$ of atom sets such that the following hold.
\begin{itemize}
\item $\AtomSet_0 = \FactSet$
\item For all $i \geq 1$, there is some rule $\Rule \in \RuleSet$ and some substitution \Subs such that $\AtomSet_i$ is the application of \Tuple{\Rule, \Subs} on $\AtomSet_{i-1}$.
\item For all rules $\Rule \in \RuleSet$ and substitutions \Subs, there is a $k \geq 0$ such that, for all $i \geq k$, the tuple \Tuple{\Rule, \Subs} is not applicable to $\AtomSet_i$ (i.e., fairness).
\end{itemize}
A \emph{chase} of \On is the atom set that results from taking the union of all atom sets in some chase sequence of \On.

The chase of an ontology \On \emph{terminates} if all of the chases of \On are finite; that is, if all chase sequences of \On are finite.
The chase of a rule set \RuleSet \emph{terminates} if, for all fact sets \FactSet, the chase of \Tuple{\RuleSet, \FactSet} terminates.
An atom set is a \emph{chase step} of \On if it occurs in a chase sequence of \On.
\end{definition}

An ontology may admit infinitely many different chases.
Nevertheless, it is well-known that any chase of an ontology is a universal model for this logical theory; i.e., a model that can be homomorphically embedded into any other model.
Therefore, this structure can be directly used to solve BCQ entailment \cite{chase-revisited}.
\begin{proposition}
An ontology entails a BCQ iff any chase of this ontology entails this BCQ.
\end{proposition}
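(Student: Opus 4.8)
The plan is to derive the equivalence from the two classical properties of the chase --- \emph{soundness} (a chase is a model of \On) and \emph{universality} (a chase maps homomorphically into every model of \On) --- while being careful about the term merging performed by EGDs. Fix an ontology $\On = \Tuple{\RuleSet, \FactSet}$, a BCQ $\BCQ = \exists \Vx . \Body[\Vx]$, a chase sequence $\AtomSet_0, \AtomSet_1, \ldots$ of \On, and its chase $\Chase = \bigcup_{i \geq 0} \AtomSet_i$. Since \Body and each $\AtomSet_i$ are equality-free, a set of ground atoms entails \BCQ exactly when \Body maps homomorphically into it; hence it suffices to relate $\On \models \BCQ$ with the existence of a homomorphism from \Body into \Chase.

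For the implication ``\Chase entails \BCQ'' $\Rightarrow$ $\On \models \BCQ$, I would prove by induction on $i$ that every atom of $\AtomSet_i$, with its Skolem subterms read as existential witnesses, is a logical consequence of \On. The base case is $\AtomSet_0 = \FactSet \subseteq \On$; in a TGD step the new atoms \SkApp{\Subs}{\Head} are entailed by the matched body via \Rule; and in an EGD step every atom of $\AtomSet_i$ is an atom of $\AtomSet_{i-1}$ with the renaming $[\Subs(y)/\Subs(x)]$ applied, hence entailed because the matched EGD forces $\Subs(x) \approx \Subs(y)$. Consequently every atom of \Chase is entailed by \On; a homomorphism from \Body into \Chase uses only finitely many of them, so $\On \models \BCQ$.

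For the converse I would pass to a quotient. First observe that representatives of terms stabilise along the sequence: an EGD application replaces a term by a $\prec$-smaller one (``deeper'' into ``shallower''), and since only finitely many ground terms have any fixed depth, the rewritings applied to a fixed term change it only finitely often. Let $\sim$ be the induced equivalence on ground terms and $\Chase^{\sim}$ the corresponding quotient structure. Using fairness together with this stabilisation --- which lets a match present in $\Chase^{\sim}$ be traced back to a match that is ``live'' at some finite stage --- one checks that $\Chase^{\sim}$ is a model of \On: each matched TGD body eventually triggers the rule, whose head atoms survive into $\Chase^{\sim}$; and each matched EGD body eventually triggers a merge, so its two terms are $\sim$-equivalent. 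Since $\On \models \BCQ$ we get $\Chase^{\sim} \models \BCQ$, i.e.\ a homomorphism from \Body into $\Chase^{\sim}$; sending each variable to the stable representative of its image turns this into a homomorphism from \Body into \Chase, so \Chase entails \BCQ. (Alternatively, both directions follow at once from the known fact that every chase of \On is a universal model of \On \cite{chase-revisited} together with preservation of BCQs under homomorphisms.)

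I expect the main obstacle to be the bookkeeping forced by EGD-induced merging: because a chase is defined as the \emph{union} of the atom sets along a sequence, it keeps ``stale'' atoms mentioning terms that were later merged away, so the union read literally as a structure is in general neither a model of \On nor homomorphically embeddable into every model of \On. Making the two directions precise therefore really does require the detour through $\Chase^{\sim}$ and the argument that term representatives eventually stabilise.
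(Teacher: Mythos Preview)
The paper does not prove this statement at all: it is labelled a ``Fact'' and is discharged in one sentence --- ``it is well-known that any chase of an ontology is a universal model for this logical theory; i.e., a model that can be homomorphically embedded into any other model'' --- with a citation to \cite{chase-revisited}. That is precisely your parenthetical alternative at the end of the proposal. So your plan is not merely ``the same approach as the paper''; it is a genuine attempt to unfold an argument that the paper deliberately outsources.

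Your plan is sound and follows the standard route (chase is sound; chase is a universal model; BCQs are preserved under homomorphisms). You have also put your finger on the one real subtlety created by the paper's specific definitions: since the chase is defined as the \emph{union} of the $\AtomSet_i$, EGD applications leave ``stale'' atoms behind, so \Chase read literally need not be a model, and the completeness direction genuinely needs the quotient/stabilisation detour you describe. Your well-foundedness argument for the rewriting chain (depth is non-increasing under $\prec$, and only finitely many ground terms of each depth are ever generated from a finite \FactSet and finitely many Skolem symbols) is correct. The lifting from $\Chase^{\sim}$ back to \Chase works because the BCQ body is finite: pick $N$ large enough that every term used in the match has stabilised, and the witnessing atoms with stable representatives all lie in $\AtomSet_N \subseteq \Chase$. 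The only place to be a little careful is the fairness step for TGDs: ``not applicable from some point on'' could in principle be because the body stopped matching rather than because the head appeared, so you do need the stabilisation argument to rule that out --- which you already have.
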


This result holds because, by our definition, rules and BCQs do not contain constants.
If we want to lift this restriction, we would need to modify the definition of the chase so it takes into account the rewriting of terms that occurs when EGDs are applied as it is done in \cite{rewriting}.

\subsection{Handling Equality via Axiomatisation}
\label{section:axiomatisation}

We present two well-known techniques to compute equality axiomatisations; namely, the \emph{standard axiomatisation} (see Section 2.1 in \cite{mfa}) and \emph{singularisation} (see Section 5 in \cite{swa}).
In the definition of these, we replace $\approx$ with the fresh predicate \EP to clarify that these two predicates are to be treated differently.

\begin{definition}
\label{definition:standard}
The \emph{standard axiomatisation} of a rule set \RuleSet, denoted with $\St{\RuleSet}$, is the TGD set that contains all of the TGDs in \RuleSet, the TGD $\Body \to \EP(x, y)$ for every EGD $\Body \to x \approx y \in \RuleSet$, an instance of \eqref{rule:reflexivity} for every $P \in \EntitiesIn{\Predicates}{\RuleSet} \setminus \{\approx\}$, the TGDs \eqref{rule:symmetry} and \eqref{rule:transitivity}, and an instance of \eqref{rule:replacement} for every $P \in \EntitiesIn{\Predicates}{\RuleSet} \setminus \{\approx\}$ and every $i \in \{1, \ldots, \Arity(P)\}$.
\begin{align}
P(\Vx_{\Arity(P)}) &\to \bigwedge\nolimits_{1 \leq i \leq \Arity(P)} \EP(x_i, x_i) \label{rule:reflexivity} \\
\EP(x, y) &\to \EP(y, x) \label{rule:symmetry} \\
\EP(x, y) \wedge \EP(y, z) &\to \EP(x, z) \label{rule:transitivity} \\
P(\Vx_{\Arity(P)}) \wedge \EP(x_i, x_i') &\to \big(P(\Vx_{\Arity(P)})[x_i / x_i'] \big) \label{rule:replacement}
\end{align}
In the above, $\Vx_k = x_1, \ldots, x_k$ for all $k \geq 1$.

The \emph{standard axiomatisation} of an ontology $\On = \Tuple{\RuleSet, \FactSet}$ is the ontology $\St{\On} = \Tuple{\St{\RuleSet}, \FactSet}$.
\end{definition}

The standard axiomatisation of an ontology \On can be directly used to solve BCQ entailment over \On.
\begin{proposition}
\label{proposition:standard}
For an ontology \On and a BCQ \BCQ that does not contain the predicate\EP, we have $\On \models \BCQ$ iff $\St{\On} \models \BCQ$.
\end{proposition}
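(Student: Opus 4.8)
The plan is to argue model-theoretically, using throughout that \BCQ contains neither \EP, nor $\mathop{\approx}$, nor constants, and recalling that $\mathop{\approx}$ is the logical equality symbol, hence interpreted as the identity relation in every model.

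\textbf{From $\St{\On} \models \BCQ$ to $\On \models \BCQ$ (the easy direction).} Given an arbitrary model $\mathcal{I}$ of \On, I would expand it to a structure $\mathcal{I}'$ over the enlarged signature by interpreting the fresh predicate \EP as the identity relation on the domain of $\mathcal{I}$ and leaving every other symbol unchanged. Then I would check that $\mathcal{I}' \models \St{\RuleSet}$: the reflexivity, symmetry, and transitivity axioms hold because identity is an equivalence relation on the whole domain; each replacement instance \eqref{rule:replacement} holds because $\EP(x_i, x_i')$ forces $x_i$ and $x_i'$ to denote the same element, so the head atom coincides with a body atom; for every EGD $\Body \to x \approx y$ of \RuleSet, the companion TGD $\Body \to \EP(x,y)$ holds since $\mathcal{I}$ satisfies the EGD and $\approx^{\mathcal{I}}$ is identity; and the remaining rules of $\St{\RuleSet}$ are exactly the TGDs of \RuleSet, which $\mathcal{I}$, hence $\mathcal{I}'$, satisfies. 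As $\mathcal{I}' \models \FactSet$ too, we get $\mathcal{I}' \models \St{\On}$, hence $\mathcal{I}' \models \BCQ$, and since $\mathcal{I}'$ and $\mathcal{I}$ have the same domain and agree on every symbol occurring in \BCQ, we conclude $\mathcal{I} \models \BCQ$.

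\textbf{From $\On \models \BCQ$ to $\St{\On} \models \BCQ$ (the main direction).} Given an arbitrary model $\mathcal{J}$ of $\St{\On}$, let $\sim$ be $\EP^{\mathcal{J}}$ augmented with every pair $(d,d)$ for $d$ in the domain of $\mathcal{J}$; the diagonal has to be added by hand because \eqref{rule:reflexivity} only forces reflexivity on elements that actually occur in some atom. The symmetry and transitivity axioms make $\sim$ an equivalence relation, and the replacement instances \eqref{rule:replacement} make it a congruence for every predicate of \RuleSet other than $\mathop{\approx}$. I would then form the quotient $\mathcal{J}/{\sim}$: its domain is the set of $\sim$-classes, a constant $c$ denotes the class of $c^{\mathcal{J}}$, a predicate $P \neq \mathop{\approx}$ holds of a tuple of classes iff it holds in $\mathcal{J}$ of some (equivalently, by congruence, every) tuple of representatives, and $\mathop{\approx}$ is identity on classes. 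The key claim is $\mathcal{J}/{\sim} \models \On$: the quotient map is a surjective homomorphism, so every TGD of \RuleSet true in $\mathcal{J}$ stays true (lift a body match to $\mathcal{J}$, obtain head witnesses there, project them down); each EGD $\Body \to x \approx y$ holds because its companion TGD $\Body \to \EP(x,y)$ does, so a lifted body match yields a pair in $\EP^{\mathcal{J}} \subseteq {\sim}$, i.e., the two relevant classes coincide; and \FactSet transfers since the quotient map preserves ground atoms. Hence $\mathcal{J}/{\sim} \models \BCQ$, and it remains to pull this back: fix once and for all a representative $r(C) \in C$ for each class $C$; a satisfying assignment $\nu$ for \BCQ in $\mathcal{J}/{\sim}$ then gives $r \circ \nu$ in $\mathcal{J}$, which satisfies every atom of \BCQ by the definition of the quotient predicates together with congruence, and assigns a single element per variable because $r$ is chosen globally. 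Thus $\mathcal{J} \models \BCQ$, and since $\mathcal{J}$ was arbitrary, $\St{\On} \models \BCQ$.

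\textbf{Main obstacle.} The delicate part is the second direction, and essentially all of the care is in the quotient bookkeeping: observing that $\EP^{\mathcal{J}}$ is reflexive only on the active part of the domain (so the diagonal must be added before quotienting), extracting the equivalence, congruence, and EGD conditions from the axioms \eqref{rule:reflexivity}--\eqref{rule:replacement}, and using a single global representative function so that the reconstructed match of \BCQ in $\mathcal{J}$ is a genuine variable assignment. A chase-based alternative — relating a chase of $\St{\On}$ to one of \On through the earlier proposition on chases being universal models — would also work, but the direct model-theoretic argument is more economical.
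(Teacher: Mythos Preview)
Your argument is correct and is the standard model-theoretic proof of this folklore result. Note, however, that the paper does not actually prove this statement: it is presented as a \emph{Fact} (the environment is named ``Fact'' in the paper), imported from the literature on equality axiomatisation (cf.\ Section~2.1 of the MFA paper cited there), and no proof is given in either the body or the appendix. So there is no paper proof to compare against; your write-up simply supplies what the paper takes for granted.

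A couple of minor remarks on presentation that do not affect correctness. First, when you quotient, you only need the resulting structure to model \On, which does not mention \EP; so you need not worry about whether \EP is a congruence for itself (it is, via symmetry and transitivity, but you can also simply drop \EP from the quotient signature). Second, the step ``lift a body match to $\mathcal{J}$'' for TGDs relies on congruence in exactly the same way as your final pull-back of the BCQ match does: pick any representatives for the classes and use the replacement axioms to move each body atom onto those representatives. You clearly have this in mind, but making it explicit once (and then reusing it for both the TGD check and the BCQ pull-back) would tighten the exposition.
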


By applying this result, we can decide BCQ entailment over \On with algorithms (or implementations!) that do not require a special treatment to deal with equality.

\begin{definition}
\label{definition:singularisation}
Consider a conjunction of atoms \Body.
A \emph{singularisation} of \Body is a conjunction of atoms that is constructed by applying the following step-by-step transformation to \Body.
\begin{enumerate}
\item For every $x \in \EntitiesIn{\Variables}{\Body}$, choose some $k_x \in \{1, \ldots, n\}$ with $n$ the number of occurrences of $x$ in \Body.
\item Simultaneously, replace the $i$-th occurrence of every variable $x$ in \Body with a fresh variable $x_i$ if $i \neq k_x$.
\item Add the atom $\EP(x, x_i)$ for every fresh variable $x_i$ introduced in the above step.
\end{enumerate}
Let $\Sing{\Body}$ be the set of all singularisations of \Body.

The \emph{singularisations} of a TGD $\Rule = \Body \to \exists \Vw . \Head$, an EGD $\RuleAux = \Body \to x \approx y$, a rule set \RuleSet, an ontology $\On = \Tuple{\RuleSet, \FactSet}$, and a BCQ $\BCQ = \exists \Vw . \Body$ are defined as follows, respectively.
\begin{itemize}
\item $\Sing{\Rule} = \{\Body' \to \exists \Vw . \Head \mid \Body' \in \Sing{\Body}\}$
\item $\Sing{\RuleAux} = \{\Body' \to \EP(x, y) \mid \Body' \in \Sing{\Body}\}$
\item Let $\Sing{\RuleSet}$ be the set of all TGD sets that contain an instance of \eqref{rule:reflexivity} for every $P \in \EntitiesIn{\Predicates}{\RuleSet} \setminus \{\approx\}$, the TGDs \eqref{rule:symmetry} and \eqref{rule:transitivity}, and (exactly) one TGD in \Sing{\Rule} for each $\Rule \in \RuleSet$.
\item $\Sing{\On} = \{\Tuple{\RuleSet', \FactSet} \mid \RuleSet' \in \Sing{\RuleSet}\}$
\item $\Sing{\BCQ} = \{\exists \Vx . \Body' \mid \Body'[\Vx] \in \Sing{\Body}\}$
\end{itemize}
\end{definition}

The use of singularisation may yield many axiomatisations for a single ontology \On, any of which can be used to solve BCQ over \On \cite{swa}.

\begin{proposition}
\label{proposition:singularisation}
For an ontology \On and a BCQ \BCQ that does not contain the predicate \EP, we have $\On \models \BCQ$ iff $\On' \models \BCQ'$ for any $\On' \in \Sing{\On}$ and $\BCQ' \in \Sing{\BCQ}$.
\end{proposition}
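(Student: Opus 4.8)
The plan is to establish the two directions separately by model-theoretic arguments. (One could instead argue through the chase, but since rules and BCQs contain neither constants nor $\approx$, working with models is cleaner here.) The guiding observation is that, in any rule set $\RuleSet' \in \Sing{\RuleSet}$, the predicate \EP is constrained only to be (the graph of) an equivalence relation, via \eqref{rule:reflexivity}, \eqref{rule:symmetry}, and \eqref{rule:transitivity}; there is deliberately no replacement rule \eqref{rule:replacement}, since the congruence behaviour of equality has instead been compiled into the bodies of rules and queries through the per-occurrence variable renaming of Definition~\ref{definition:singularisation}. Fix one $\On' = \Tuple{\RuleSet', \FactSet} \in \Sing{\On}$ and one $\BCQ' \in \Sing{\BCQ}$. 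As $\Sing{\On}$ and $\Sing{\BCQ}$ are non-empty, it is enough to prove $\On \models \BCQ$ iff $\On' \models \BCQ'$ for this fixed pair, which delivers both the ``for all'' and the ``for some'' readings of the statement.

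For the ``if'' direction, take an arbitrary model $I$ of $\On$ (so $\approx$ denotes genuine identity in $I$) and expand it to an interpretation $I'$ with the same domain, the same interpretation of every predicate other than $\approx$, and \EP interpreted as the identity relation. Checking $I' \models \On'$ is routine: \eqref{rule:reflexivity}--\eqref{rule:transitivity} hold because identity is an equivalence relation; and given any singularised TGD or EGD of $\RuleSet'$ and any assignment satisfying its body in $I'$, the added atoms $\EP(x, x_i)$ force every freshly introduced occurrence-copy of a variable to take the same value as the original variable, so the induced assignment satisfies the original, \EP-free body in $I$, whence the (untouched, \EP-free) head --- or, for an EGD, the equality --- follows from $I \models \On$. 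As $\FactSet$ is equality-free, $I' \models \FactSet$ as well; so $I' \models \On'$ and hence $I' \models \BCQ'$. Collapsing the occurrence-copies back --- again using that the \EP atoms of $\BCQ'$ denote identities --- yields $I \models \BCQ$, and since $I$ was arbitrary, $\On \models \BCQ$.

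The ``only if'' direction is the substantive one. Given an arbitrary model $I'$ of $\On'$, let $E$ be the reflexive closure of the binary relation that $I'$ assigns to \EP; by \eqref{rule:symmetry} and \eqref{rule:transitivity}, $E$ is an equivalence relation on $\Delta^{I'}$. From $I'$ I build, first, an interpretation $\hat I$ with domain $\Delta^{I'}$ in which the extension of every predicate other than $\approx$ is the $E$-closure of its extension in $I'$ --- that is, $P^{\hat I}$ collects every tuple that is componentwise $E$-related to some tuple of $P^{I'}$; and, second, the quotient $I = \hat I / E$, interpreting $\approx$ as genuine identity on $\Delta^{I'}/E$ and every constant $c$ as the $E$-class of $c^{I'}$. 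The crux is to show $I \models \RuleSet$. Take a TGD $\Rule = \Body \to \exists \Vw . \Head \in \RuleSet$, let $\Body' \to \exists \Vw . \Head$ be the singularisation of \Rule contained in $\RuleSet'$, and suppose some assignment $h$ satisfies \Body in $I$. Fix, for each variable $x$, an $E$-representative $c_x \in \Delta^{I'}$ of the class $h(x)$; then, working one atom of \Body at a time, the definitions of $I$ and $\hat I$ supply an $I'$-witness tuple whose components are $E$-related to the corresponding $c_x$, and these (generally different, atom by atom) witnesses determine an assignment $h'$ of the variables of $\Body'$ into $\Delta^{I'}$ that satisfies the non-\EP atoms of $\Body'$ in $I'$. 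The atoms $\EP(x, x_i)$ of $\Body'$ hold under $h'$ as well, since $h'(x)$ and $h'(x_i)$ are both $E$-related to the common representative $c_x$ and, as they occur in predicate extensions, \eqref{rule:reflexivity} guarantees that the required \EP-pairs are present. Since $I'$ satisfies the singularisation of \Rule, $h'$ extends to the existential variables $\Vw$ so as to satisfy the (\EP-free, un-singularised) head \Head in $I'$; taking $E$-classes then witnesses $\exists \Vw . \Head$ in $I$, and the $\Vx$-part of this witness agrees with $h$ because each $h'(x)$ lies in the class $h(x)$. Hence $I \models \Rule$; the EGD case is analogous and produces genuine identity in the quotient; and $I \models \FactSet$ is immediate since $\FactSet$ is equality-free. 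So $I \models \On$, whence $I \models \BCQ$, and the same lifting argument --- applied now to the body of \BCQ and the singularisation $\BCQ'$ --- turns a satisfying assignment of \BCQ in $I$ into one of $\BCQ'$ in $I'$, giving $I' \models \BCQ'$.

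The step I expect to be the main obstacle is this last direction, and within it the bookkeeping that makes the componentwise $E$-closure defining $\hat I$ mesh with the freedom granted by singularisation: one must verify that an $E$-equivalent $I'$-witness really does exist for every body atom, that these witnesses may be chosen independently atom by atom (which is exactly why per-occurrence fresh variables together with transitivity of $E$ suffice in place of a replacement rule), that the gluing atoms $\EP(x, x_i)$ are genuinely satisfied, and that after quotienting both the existential head and the EGD conclusions come out correctly --- including the minor point that every frontier variable of \Rule occurs in \Body, so that $h'$ is defined wherever \Head needs it.
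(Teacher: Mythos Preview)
The paper does not actually prove this statement: it is presented as a \emph{Fact} (note that the \texttt{proposition} environment is renamed ``Fact'') and attributed to prior work, specifically Marnette's paper introducing singularisation \cite{swa}. So there is no in-paper proof to compare against.

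That said, your proposal is a correct, self-contained model-theoretic proof and is essentially the standard argument for why singularisation preserves BCQ entailment. The key structural insight---that passing from $\Body$ to any $\Body' \in \Sing{\Body}$ makes every variable occur in \emph{exactly one} non-\EP atom, so that $I'$-witnesses for the body atoms can be picked independently and then glued back together via the \EP-atoms---is exactly what makes the quotient construction go through without a replacement rule of type~\eqref{rule:replacement}. Your case analysis for why $(h'(x), h'(x_i)) \in \EP^{I'}$ (not merely $h'(x) \mathrel{E} h'(x_i)$) is the one genuinely delicate point, and you handle it correctly: both $h'(x)$ and $h'(x_i)$ lie in predicate tuples of $I'$, so rule~\eqref{rule:reflexivity} puts them in the domain of $\EP^{I'}$, and then symmetry plus transitivity of $\EP^{I'}$ (rather than of the reflexive closure $E$) yields the required atom. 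The lifting of $I \models \BCQ$ back to $I' \models \BCQ'$ reuses the same machinery, and the ``if'' direction via interpreting \EP as identity is routine. One minor presentational remark: you could simplify by choosing each representative $c_x$ to already lie in some predicate extension of $I'$ whenever the class $h(x)$ contains such an element, which slightly shortens the \EP-atom verification; but this is cosmetic.
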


Even though the use of equality axiomatisations does not affect the results of BCQ entailment (see Facts~\ref{proposition:standard} and \ref{proposition:singularisation}), it may influence chase termination.
For instance, in the following section, we show examples of rule sets \RuleSet such that \RuleSet is terminating with respect to the chase, and \Sing{\RuleSet} or some set in \St{\RuleSet} are not (see Theorems~\ref{example:1} and \ref{example:2}).

\section{Chase Termination and Axiomatisations}
\label{section:chase-termination}

In this section, we show that if the chase terminates for the standard axiomatisation or any singularisation of a given rule set \RuleSet, then the chase of \RuleSet also terminates (see Theorems~\ref{theorem:standard} and \ref{theorem:singularisation}).
Moreover, we provide some results stating that the converses of these implications do not hold (see Theorems~\ref{example:1} and \ref{example:2}).
Prior to showing these results, we introduce some preliminary definitions.

An atom set \AtomSet \emph{satisfies} a rule \Rule iff there is no substitution \Subs such that \Tuple{\Rule, \Subs} is applicable to \AtomSet.

\begin{definition}
\label{definition:epcompleteness}
An atom set \AtomSet is \emph{\EPComplete} iff it satisfies the rules \eqref{rule:symmetry} and \eqref{rule:transitivity} introduced in Definition~\ref{definition:standard}, and $\EP(t, t) \in \AtomSet$ for every $t \in \EntitiesIn{\Terms}{\AtomSet}$.
\end{definition}

\begin{definition}
\label{definition:rewriting-standard}
Consider an \EPComplete atom set \AtomSet.
Then, let $\Rewriting_\AtomSet$ be the rewriting that maps every $t \in \EntitiesIn{\Terms}{\AtomSet}$ into the term $\Rewriting_\AtomSet(t) \in \EntitiesIn{\Terms}{\AtomSet}$ such that
\begin{itemize}
\item $\EP(t, \Rewriting_\AtomSet(t)) \in \AtomSet$ and
\item for all $u \in \EntitiesIn{\Terms}{\AtomSet}$ with $u \neq \Rewriting_\AtomSet(t)$ and $\EP(t, u) \in \AtomSet$, we have that $\Rewriting_\AtomSet(t) \prec u$.
\end{itemize}
Furthermore, let $[\AtomSet] = \AtomSet\Rewriting_\AtomSet \setminus \{\EP(t, u) \mid t, u \in \Terms\}$.
\end{definition}

\begin{restatable}{theorem}{theoremstandard}\label{theorem:standard}
The chase of a rule set \RuleSet terminates if the chase of the TGD set \St{\RuleSet} terminates.
\end{restatable}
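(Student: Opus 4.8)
The plan is to establish the contrapositive: assuming that the chase of \RuleSet does not terminate, I will exhibit, for a suitable fact set \FactSet, an infinite chase sequence of $\Tuple{\St{\RuleSet}, \FactSet}$; by definition this shows that the chase of the TGD set \St{\RuleSet} does not terminate. So fix \FactSet and an infinite chase sequence $\AtomSet_0, \AtomSet_1, \ldots$ of $\Tuple{\RuleSet, \FactSet}$. The idea is to \emph{simulate} this sequence: since \St{\RuleSet} never merges terms but records equalities in \EP-atoms and propagates them via rules \eqref{rule:reflexivity}--\eqref{rule:replacement}, the ``merged'' atom set $\AtomSet_i$ should correspond to an \EPComplete atom set $\AtomSetAux_i$ that can be derived from \FactSet using rules of \St{\RuleSet} and satisfies $[\AtomSetAux_i] = \AtomSet_i$, in the notation of Definitions~\ref{definition:epcompleteness} and~\ref{definition:rewriting-standard}. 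A useful preliminary observation is that any infinite chase sequence contains infinitely many TGD applications: an EGD application replaces, throughout predicate-argument positions, one of $\Subs(x), \Subs(y)$ by the other and thus strictly decreases the finite number of terms occurring as predicate arguments, and no rule reintroduces an eliminated argument term except a TGD application; hence only finitely many steps can follow the last TGD application.

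The core is an induction producing the $\AtomSetAux_i$ so that, besides $[\AtomSetAux_i] = \AtomSet_i$, each $\AtomSetAux_i$ is \EPComplete, is saturated under \eqref{rule:reflexivity}--\eqref{rule:replacement}, and is derivable from \FactSet using rules of \St{\RuleSet}, and moreover every term of $\AtomSet_i$ is fixed by $\Rewriting_{\AtomSetAux_i}$. The base case takes \FactSet saturated under \eqref{rule:reflexivity}, which adds only reflexive \EP-atoms. For the inductive step one first checks that, because $\AtomSetAux_i$ is \EPComplete and saturated under \eqref{rule:replacement} with $[\AtomSetAux_i] = \AtomSet_i$, any body match $\Body\Subs \subseteq \AtomSet_i$ already satisfies $\Body\Subs \subseteq \AtomSetAux_i$ (rewrite an atom of $\AtomSetAux_i$ towards its representative one argument position at a time, using replacement). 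If $\AtomSet_{i+1}$ arises from an EGD $\Body \to x \approx y$ with \Subs, I apply the TGD $\Body \to \EP(x, y) \in \St{\RuleSet}$ with \Subs --- applicable because $\EP(\Subs(x), \Subs(y)) \notin \AtomSetAux_i$, since otherwise $\Subs(x)$ and $\Subs(y)$ would be identified in $[\AtomSetAux_i] = \AtomSet_i$, contradicting $\Subs(x) \neq \Subs(y)$ --- and re-saturate; I then verify that $\Rewriting$ now maps the merged class onto $\Subs(x)$ and that the new $\AtomSetAux_{i+1}$ has $[\AtomSetAux_{i+1}] = \AtomSet_i[\Subs(y)/\Subs(x)] = \AtomSet_{i+1}$. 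If instead $\AtomSet_{i+1} = \AtomSet_i \cup \SkApp{\Subs}{\Head}$ arises from a TGD \Rule, I apply $\Rule \in \St{\RuleSet}$ with the same \Subs; this is a legal non-oblivious application, because $\Head\SubsAux \subseteq \AtomSetAux_i$ for some $\SubsAux \supseteq \Subs$ would give $\Head(\Rewriting_{\AtomSetAux_i} \circ \SubsAux) \subseteq [\AtomSetAux_i] = \AtomSet_i$ with $\Rewriting_{\AtomSetAux_i} \circ \SubsAux \supseteq \Subs$, contradicting applicability in the original sequence. As Skolem terms are labelled, the atoms added are exactly the $\SkApp{\Subs}{\Head}$ added in the original step, the new Skolem terms turn out to be \EP-related only to themselves, and re-saturating yields $\AtomSetAux_{i+1}$ with $[\AtomSetAux_{i+1}] = \AtomSet_i \cup \SkApp{\Subs}{\Head} = \AtomSet_{i+1}$.

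Concatenating, over all $i$, the finitely many rule applications performed at step $i$ together with the interposed saturation rounds yields an infinite sequence $T$ of rule applications of $\Tuple{\St{\RuleSet}, \FactSet}$ starting from \FactSet. The computation that justified TGD applicability also shows $\SkApp{\Subs}{\Head} \not\subseteq \AtomSetAux_i$ at each simulated TGD step, so every such step strictly enlarges the atom set; since there are infinitely many of them, $T$ does not stabilise and its union $\AtomSetAux_\infty$ is infinite. Finally I would argue that $T$ is \emph{fair}, hence an actual chase sequence: saturating infinitely often makes $\AtomSetAux_\infty$ satisfy \eqref{rule:reflexivity}--\eqref{rule:replacement}, and since $\bigcup_i \AtomSet_i$ is a chase of $\Tuple{\RuleSet, \FactSet}$ --- hence a model of \RuleSet --- any body match of an original rule in $\AtomSetAux_\infty$ can be pushed through $\Rewriting_{\AtomSetAux_\infty}$ to a body match in $\bigcup_i \AtomSet_i$, and the resulting head witness can be pulled back into $\AtomSetAux_\infty$ using that $\AtomSetAux_\infty$ is \EPComplete and saturated under replacement; thus $\AtomSetAux_\infty$ satisfies every rule of \St{\RuleSet} and $T$ is an infinite chase sequence of $\Tuple{\St{\RuleSet}, \FactSet}$. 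I expect the main obstacle to be the bookkeeping that keeps $[\AtomSetAux_i] = \AtomSet_i$ exact across steps --- in particular pinning down how $\Rewriting_{\AtomSetAux_i}$ changes when an equality is added --- together with the fairness verification, where one must ensure that the non-oblivious applicability test, evaluated at the labelled Skolem terms, never lets the simulated chase outrun or lag behind the original one.
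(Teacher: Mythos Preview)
Your proposal is essentially the paper's approach: for each chase step $\AtomSet_i$ of $\Tuple{\RuleSet, \FactSet}$ one constructs an \EPComplete chase step $\AtomSetAux_i$ of $\Tuple{\St{\RuleSet}, \FactSet}$ with $[\AtomSetAux_i] = \AtomSet_i$, by first saturating under the auxiliary rules \eqref{rule:reflexivity}--\eqref{rule:replacement} and then applying the $\St{\RuleSet}$-counterpart of the original rule with the same substitution, checking applicability by pushing a hypothetical head witness through $\Rewriting$ exactly as you describe. The only difference is the wrap-up: rather than concatenating the simulated applications into an explicit infinite derivation of $\St{\On}$ and verifying fairness, the paper simply notes that, since $\St{\RuleSet}$ is TGD-only, termination of $\St{\On}$ forces it to have only finitely many chase steps, so the correspondence $\AtomSetAux \mapsto [\AtomSetAux] = \AtomSet$ bounds the chase steps of $\On$ --- this sidesteps precisely the fairness bookkeeping you anticipate as the main obstacle.
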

\begin{proof}[Proof Sketch]
Consider some fact set \FactSet and the ontology $\On = \Tuple{\RuleSet, \FactSet}$.
Theorem~\ref{theorem:standard} follows if chase termination of $\St{\On} = \Tuple{\St{\RuleSet}, \FactSet}$ implies chase termination of $\On$.
Note that, if \St{\On} terminates, then the number of chase steps of this ontology is necessarily finite.
In turn, this claim follows if, for every chase step $\AtomSet$ of \On, there is some \EPComplete chase step $\AtomSetAux$ of \St{\On} such that $[\AtomSetAux] = \AtomSet$.
This implication can be proven via induction.
\end{proof}

A proof of Theorem~\ref{theorem:standard} for the standard chase was presented
in Theorem 4.2 \cite{anatomy-chase}. For completeness, we present a proof for
the non-oblivious chase (which is the chase variant considered in this
paper) in the extended version of this paper.

%

We can show with a counter-example that the converse of Theorem~\ref{theorem:standard} does not hold.

\begin{theorem}
\label{example:1}
Even if the chase of a rule set \RuleSet terminates, the chase of the TGD set \St{\RuleSet} may not.
\end{theorem}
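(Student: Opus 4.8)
The plan is to exhibit a small rule set whose chase terminates but whose standard axiomatisation does not. Take $\RuleSet = \{\rho, \mu\}$, where $\rho$ is the TGD $A(x) \to \exists y\,.\, R(x, y)$ and $\mu$ is the EGD $R(x, y) \to x \approx y$. I will argue (i) that for every fact set \FactSet every chase sequence of $\Tuple{\RuleSet, \FactSet}$ is finite, and (ii) that for the fact set $\{A(a)\}$ there is an infinite chase sequence of $\Tuple{\St{\RuleSet}, \{A(a)\}}$; the latter suffices because, by Definition~\ref{definition:chase}, the chase of $\St{\RuleSet}$ terminates only if \emph{all} chase sequences of \emph{all} of its ontologies are finite.

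For (i), I would first establish the invariant that in every chase step of $\Tuple{\RuleSet, \FactSet}$ every atom $A(t)$ satisfies $t \in \EntitiesIn{\Constants}{\FactSet}$: it holds for \FactSet since that is a fact set, it is preserved by applications of $\rho$ (whose head does not mention $A$), and it is preserved by applications of $\mu$, since applying $\mu$ replaces a term only by a $\prec$-smaller one and $\prec$ refines term depth, so it can never push a functional term into the argument of an $A$-atom. Hence only finitely many $A$-atoms ever occur. Next, $\rho$ can be applied with a fixed substitution $[x / t]$ at most once, because the atom $R(t, f_y(t))$ it produces — and every atom obtained from it by later applications of $\mu$ — keeps $t$, or the term into which $t$ has been merged, in its first position, so the head of $\rho$ stays satisfied for that substitution and the application is permanently blocked. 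Thus $\rho$ fires only finitely often and introduces only finitely many ground terms, while each application of $\mu$ strictly decreases the number of terms occurring as arguments of atoms; a routine accounting then bounds the length of every chase sequence.

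For (ii), I would fix $\FactSet = \{A(a)\}$, put $t_0 = a$ and $t_{k+1} = f_y(t_k)$, and note that the only ground terms that can arise are the pairwise distinct $t_0, t_1, t_2, \dots$. The sequence I would build performs, for $k = 0, 1, 2, \dots$ in this order, the three applications: (1)~$\rho$ with substitution $[x / t_k]$, adding $R(t_k, t_{k+1})$; (2)~the TGD $R(x, y) \to \EP(x, y)$ of $\St{\RuleSet}$ (the one obtained from $\mu$) applied to $R(t_k, t_{k+1})$, adding $\EP(t_k, t_{k+1})$; and (3)~the instance of the replacement rule~\eqref{rule:replacement} for $A$ applied to $A(t_k)$ and $\EP(t_k, t_{k+1})$, adding $A(t_{k+1})$. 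The remaining applicable instances of the finitely many rule schemas of $\St{\RuleSet}$ — reflexivity~\eqref{rule:reflexivity}, symmetry~\eqref{rule:symmetry}, transitivity~\eqref{rule:transitivity}, and the replacement rules~\eqref{rule:replacement} for $R$ — are interleaved into the sequence so that every instance that ever becomes applicable is eventually applied; since $\St{\RuleSet}$ contains no EGDs, applying a rule only ever adds atoms, so an applied instance never becomes applicable again and this discipline yields a \emph{fair} sequence. The one scheduling requirement is that, for each $k$, application~(1) for index $k$ precede every application that introduces an atom of the form $R(t_k, \cdot)$; inspecting the heads of the rules of $\St{\RuleSet}$ shows that the only instances that can introduce the first such atom are $\rho$ on $A(t_k)$ itself and the instance of the replacement rule~\eqref{rule:replacement} for $R$ at its first position with target $t_k$, and instances of the latter can always be postponed without breaking fairness. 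These requirements induce an acyclic partial order on applications, so a fair schedule respecting them exists; the resulting chase sequence is infinite, as it keeps introducing the new terms $t_0, t_1, t_2, \dots$.

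The step I expect to be the main obstacle is the fairness bookkeeping in part~(ii): one must check that postponing the $R(t_k, \cdot)$-producing instances of the replacement rule for $R$ is compatible with fairness (they stay applicable and get applied later), and that no other rule of $\St{\RuleSet}$ can slip an atom $R(t_k, \cdot)$ into the atom set before application~(1) for index $k$ is carried out; both reduce to a short case analysis over the rule schemas of $\St{\RuleSet}$.
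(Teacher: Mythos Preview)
Your plan is correct, and the counterexample you chose is genuinely different from the paper's. The paper takes $\RuleSet = \{A(x) \to \exists w.\, R(x,w)\wedge B(w),\; R(x,y)\wedge R(x,z)\to y\approx z\}$ together with the fact set $\{A(a), R(a,a)\}$: the extra fact $R(a,a)$ is what, combined with the freshly created $R(a,f_w(a))$, activates the axiomatised functionality rule and produces $\EP(a,f_w(a))$, from which replacement propagates $A$ to ever deeper terms. Your choice of the ``collapsing'' EGD $R(x,y)\to x\approx y$ removes the need for that auxiliary fact and for the $B$ conjunct: $\EP(t_k,t_{k+1})$ is obtained directly from $R(t_k,t_{k+1})$, and the same replacement step drives the cascade from $\{A(a)\}$ alone. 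This makes your witness strictly smaller.

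In exchange, the paper's proof is lighter on bookkeeping: it simply asserts termination of \RuleSet and then describes the infinite chase of $\St{\On}$ by explicitly naming the atom sets $\FactSet_i$ and $\FactSetAux_i$ that appear along the sequence, leaving fairness implicit in the observation that the $\FactSetAux_i$ are fully saturated over the terms seen so far. Your proposal pays more attention to both points: you give an actual argument for~(i) via the $A$-atoms-on-constants invariant and the term-count decrease under $\mu$, and for~(ii) you isolate the one real scheduling hazard---that a replacement-for-$R$ instance could create $R(t_{k+1},\cdot)$ before $\rho$ fires on $t_{k+1}$---and argue it can be postponed past step~(1)$_{k+1}$ without violating fairness, since in a TGD-only chase atom sets grow monotonically and every postponed instance remains applicable until discharged. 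That analysis is sound; carrying it out as a phase-by-phase schedule (at phase~$k$ apply (1)$_k$,(2)$_k$,(3)$_k$, then saturate all rules whose head avoids creating $R(t_{k+1},\cdot)$ and avoids introducing $t_{k+2}$) makes the fairness check routine.
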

\begin{proof}
The chase of the rule set $\RuleSet = \{\eqref{example:rule1}, \eqref{example:rule2}\}$ does terminate whereas the chase of \St{\RuleSet} does not.
\begin{align}
A(x) &\to \exists w . R(x, w) \wedge B(w) \label{example:rule1} \\
R(x, y) \wedge R(x, z) &\to y \approx z \label{example:rule2}
\end{align}
For instance, the chase of the ontology $\St{\On} = \Tuple{\St{\RuleSet}, \FactSet}$ where $\FactSet$ is the fact set $\{A(a), R(a, a)\}$ admits infinite chase sequences.
Namely, \St{\On} admits (fair and infinite) chase sequences that contain all of the atom sets in the sequence $\FactSet_1 \cup \FactSet, \FactSetAux_1, \FactSet_2, \FactSetAux_2, \FactSet_3, \FactSetAux_3, \ldots$ where $\FactSetAux_i$ is the set of all atoms that can be defined using the predicates $A$, $B$, $R$, and $\EP$, and the terms in $\{f^0_w(a), \ldots, f^{i-1}_w(a)\}$ for all $i \geq 1$;
\begin{align*}
\FactSet_1 = \{	&R(a, f_w^1(a)), B(f_w^1(a)), \EP(a, f_w^1(a)), A(f_w^1(a))\}; \\
\FactSet_i = \{	&R(f_w^{i-1}(a), f_w^i(a)), B(f_w^i(a)), \EP(f_w^{i-1}(a), a), \\
			&R(a, f_w^i(a)), \EP(a, f_w^i(a)), A(f_w^i(a))\} \cup \FactSetAux_{i-1}
\end{align*}
for all $i \geq 2$; and $f^0_w(a) = a$, $f^1_w(a) = f_w(a)$, $f^2_w(a) = f_w(f_w(a))$, and so on.
\end{proof}


The fact that the chase does not terminate for the standard axiomatisation of a rule set as simple as the one described above illustrates why the use of this axiomatisation technique often causes the acyclicity tests to fail.
We empirically verify this insight in the evaluation section.

As per Definition~\ref{definition:singularisation}, a rule set may admit many different singularisations.
If the chase of at least one of these terminates, then so does the chase of the rule set itself.

\begin{restatable}{theorem}{theoremsingularisation}\label{theorem:singularisation}
The chase of a rule set \RuleSet terminates if the chase of some TGD set in \Sing{\RuleSet} terminates.
\end{restatable}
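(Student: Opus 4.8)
The plan is to mirror the structure of the proof sketch for Theorem~\ref{theorem:standard}, but with the \EP-completeness machinery replaced by bookkeeping that tracks how singularisation splits a single variable occurrence into several copies. Fix a fact set \FactSet and the ontology $\On = \Tuple{\RuleSet, \FactSet}$, and let $\RuleSet' \in \Sing{\RuleSet}$ be the singularisation whose chase terminates; write $\On' = \Tuple{\RuleSet', \FactSet}$. Termination of $\On'$ means $\On'$ has only finitely many chase steps, so it suffices to inject the chase steps of $\On$ into those of $\On'$: for every chase step $\AtomSet$ of $\On$ I will produce an \EPComplete chase step $\AtomSetAux$ of $\On'$ with $[\AtomSetAux] = \AtomSet$ (using the $[\cdot]$ operator of Definition~\ref{definition:rewriting-standard}). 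Since $[\cdot]$ collapses only \EP-atoms and applies a rewriting that merges \EP-equivalent terms, and since distinct chase steps of $\On$ of strictly increasing size force strictly increasing \EPComplete witnesses in $\On'$, an infinite chase sequence of $\On$ would yield infinitely many chase steps of $\On'$, a contradiction.

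The argument is by induction on the length of a chase sequence $\AtomSet_0, \AtomSet_1, \ldots$ of $\On$. For the base case, $\AtomSet_0 = \FactSet$ is equality-free, so taking $\AtomSetAux_0$ to be $\FactSet$ augmented with $\EP(t,t)$ for every $t \in \EntitiesIn{\Terms}{\FactSet}$ gives an \EPComplete set with $[\AtomSetAux_0] = \FactSet$; one checks $\AtomSetAux_0$ is a chase step of $\On'$ using the reflexivity rules \eqref{rule:reflexivity} and the fact that \RuleSet' contains \eqref{rule:symmetry} and \eqref{rule:transitivity}. For the inductive step, suppose $\AtomSet_{i}$ is obtained from $\AtomSet_{i-1}$ by applying $\Tuple{\Rule, \Subs}$, and we have an \EPComplete chase step $\AtomSetAux$ of $\On'$ with $[\AtomSetAux] = \AtomSet_{i-1}$. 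The rule $\Rule \in \RuleSet$ corresponds to a rule $\Rule' \in \RuleSet'$ obtained by choosing one singularisation $\Body' \in \Sing{\Body}$ of its body. The key point is that since $\Subs$ maps $\Body$ into $\AtomSet_{i-1} = [\AtomSetAux]$, and the \EP-atoms in $\AtomSetAux$ encode, via reflexivity, symmetry and transitivity, all the equalities that $[\cdot]$ identified, one can build a substitution $\Subs'$ that maps the singularised body $\Body'$ into $\AtomSetAux$: the copies $x_i$ of a variable $x$ get sent to the various $\AtomSetAux$-terms that $[\cdot]$ merged into $\Subs(x)$, and the side atoms $\EP(x, x_i)$ of the singularisation are satisfied precisely because those terms are \EP-connected in $\AtomSetAux$. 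One then either applies $\Tuple{\Rule', \Subs'}$ (extended on $\Vw$ exactly as $\Subs$ would be via skolemisation, which is legitimate by assumption~($\dagger$) and the shared Skolem symbols between \Sk{\RuleSet} and \Sk{\RuleSet'}) to add the head atoms, or, if \Rule is an EGD, uses the generated $\EP$-atom together with the replacement-free handling to obtain a new \EPComplete set $\AtomSetAux'$; after re-saturating under \eqref{rule:reflexivity}, \eqref{rule:symmetry}, \eqref{rule:transitivity} one gets the desired \EPComplete chase step with $[\AtomSetAux'] = \AtomSet_i$. If $\Tuple{\Rule', \Subs'}$ happens already not to be applicable to $\AtomSetAux$ because the head is present, the same must hold for $\Tuple{\Rule,\Subs}$ on $\AtomSet_{i-1}$, contradicting applicability there, so this case does not arise.

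The main obstacle I expect is the EGD case and the interaction between the order $\prec$, the rewriting $\Rewriting_\AtomSet$, and singularised bodies. When \Rule is of form \eqref{rule:egd}, applying it in $\On$ merges $\Subs(y)$ into $\Subs(x)$ (or vice versa) throughout $\AtomSet_{i-1}$; on the $\On'$ side the corresponding rule only derives $\EP(\Subs'(x),\Subs'(y))$, and I must argue that after adding this atom and closing under symmetry/transitivity, the canonical representative chosen by $\Rewriting_{\AtomSetAux'}$ matches the term chosen by the $\prec$-based tie-break in the chase of $\On$ — i.e. that $[\cdot]$ still commutes with the rule application. Getting the depth/$\prec$ comparisons to line up, and making sure the induction hypothesis is strong enough (it should assert $[\AtomSetAux]=\AtomSet_{i-1}$ \emph{and} that $\Rewriting_{\AtomSetAux}$ restricted to $\EntitiesIn{\Terms}{\AtomSet_{i-1}}$ is the identity, i.e. $\AtomSet_{i-1}$ already contains the representatives), is the delicate part; this is exactly where the convention ``merge deeper terms into shallower ones'' from Definition~\ref{definition:applicability} earns its keep. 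A secondary nuisance is fairness: I must check that the embedding can be arranged so that a \emph{fair} chase sequence of $\On$ lifts to a legitimate (not necessarily fair, but that is fine) sequence of chase \emph{steps} of $\On'$, which is enough since I only need finiteness of the set of chase steps, not of one particular fair sequence.
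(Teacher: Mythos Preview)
Your plan to reuse the $[\cdot]$ operator of Definition~\ref{definition:rewriting-standard} as the invariant linking chase steps of $\On$ and $\On'$ has a genuine gap, and it is not in the EGD case where you expect trouble, but in the TGD case. The operator $[\cdot]$ is built from the canonical rewriting $\Rewriting_{\AtomSetAux}$, which maps each term of $\AtomSetAux$ to the $\prec$-least term in its \EP-equivalence class \emph{inside} $\AtomSetAux$. That works for the standard axiomatisation because the replacement rules~\eqref{rule:replacement} guarantee that every $\On$-term is actually present in the witnessing $\AtomSetAux$ and is \EP-linked to its preimages. Singularisation has no replacement rules, so neither of these is guaranteed. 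Concretely, take $\RuleSet=\{\,C(x)\to\exists v.\,A(v),\;A(x)\wedge D(y)\to x\approx y,\;A(x)\to\exists w.\,R(x,w)\,\}$ and $\FactSet=\{C(c),D(d)\}$. In $\On$ one derives $A(f_v(c))$, merges $f_v(c)$ into $d$ to obtain $A(d)$, and then the last TGD produces $R(d,f_w(d))$. In any corresponding chase step $\AtomSetAux$ of $\On'$ the atom $A(d)$ is \emph{never} derived (only $A(f_v(c))$ and the \EP-atoms linking $f_v(c)$ to $d$), so the only way to fire the singularised last TGD is with $\Subs'(x)=f_v(c)$, creating the Skolem term $f_w(f_v(c))$. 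This term sits in a singleton \EP-class in $\AtomSetAux$, hence $\Rewriting_{\AtomSetAux}(f_w(f_v(c)))=f_w(f_v(c))\neq f_w(d)$, and $[\AtomSetAux]$ contains $R(d,f_w(f_v(c)))$ rather than $R(d,f_w(d))$. The invariant $[\AtomSetAux]=\AtomSet_i$ therefore fails, and your proposed strengthening (that $\Rewriting_{\AtomSetAux}$ is the identity on terms of $\AtomSet_{i-1}$) does not help, since the mismatch is on a freshly introduced Skolem term.

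The paper's proof sidesteps this by \emph{not} using $[\cdot]$ at all. Instead it carries along, by hand, a sequence of rewritings $\Rewriting_0,\Rewriting_1,\ldots$ that are decoupled from the \EP-atoms of $\AtomSetAux_i$: the invariant is $\AtomSetAux_i\Rewriting_i\setminus\{\EP\text{-atoms}\}=\AtomSet_i$ together with the compatibility condition that $\Rewriting_i(t)=\Rewriting_i(u)$ implies $\EP(t,u)\in\AtomSetAux_i$. In the TGD step the new rewriting is explicitly extended by $\Rewriting_i(f_w(\Subs'(\Vy)))=f_w(\Subs(\Vy))$, which is exactly the binding your canonical $\Rewriting_{\AtomSetAux}$ cannot supply. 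The compatibility condition is what lets one satisfy the extra $\EP(x,x_i)$ atoms in the singularised body, playing the role your proposal assigns to \EP-completeness. To repair your argument you should replace the $[\cdot]$ invariant by this externally tracked rewriting; once you do, the EGD case you were worried about becomes comparatively routine, since the rewriting is simply post-composed with the merge $[\Subs(y)/\Subs(x)]$ (or its symmetric variant) and the required \EP-atoms follow after closing under \eqref{rule:symmetry} and \eqref{rule:transitivity}.
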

\begin{proof}[Proof Sketch]
Assume that there is some rule set $\RuleSet' \in \Sing{\RuleSet}$ for which the chase terminates.
Then, let \FactSet be some fact set, let $\On = \Tuple{\RuleSet, \FactSet}$ and $\On' = \Tuple{\RuleSet', \FactSet}$, and let $\AtomSet_0, \AtomSet_1, \ldots$ be a chase sequence of \On.
We can show via induction that there is a sequence of atom sets $\AtomSetAux_0, \AtomSetAux_1, \ldots$ and a sequence of rewritings $\Rewriting_0, \Rewriting_1, \ldots$ such that, for all $i \geq 0$,
\begin{enumerate}
\item $\AtomSetAux_i$ is a chase step of $\On'$,
\item $\AtomSetAux_i\Rewriting_i \setminus \{\EP(t, u) \mid t, u \in \Terms\} = \AtomSet_i$, and
\item $\EP(t, u) \in \AtomSetAux_i$ for all $t, u \in \Terms$ in $\AtomSetAux_i$ with $\Rewriting_i(t) = \Rewriting_i(u)$.
\end{enumerate}
Since (1), we conclude that the sequence $\AtomSetAux_0, \AtomSetAux_1, \ldots$ is finite and hence, the sequence $\AtomSet_0, \AtomSet_1, \ldots$ is also finite by (2).
In turn, this implies the chase of \RuleSet terminates.
Item (3) is an auxiliary claim used to structure the induction.
\end{proof}


We can show with a counter-example that the converse of Theorem~\ref{theorem:singularisation} does not hold.

\begin{theorem}
\label{example:2}
Even if the chase of a rule set \RuleSet terminates, the chase of every TGD set in $\Sing{\RuleSet}$ may not.
\end{theorem}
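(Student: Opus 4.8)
The plan is to exhibit one rule set that witnesses both halves of the statement. I would take $\RuleSet = \{\rho_1, \rho_2\}$ with
\begin{align*}
\rho_1 &= R(x, x) \to \exists w_1, w_2 .\; R(w_1, x) \wedge R(x, w_2), \\
\rho_2 &= R(x, y) \to x \approx y.
\end{align*}
The first step is to check that the chase of $\RuleSet$ terminates, which here is almost immediate: for any atom set $\AtomSet$, any substitution $\Subs$ defined exactly on $x$, and the extension $\SubsAux = \Subs \cup [w_1 / \Subs(x), w_2 / \Subs(x)]$, the head of $\rho_1$ under $\SubsAux$ is $\{R(\Subs(x), \Subs(x))\}$, which is contained in $\AtomSet$ whenever the body $R(\Subs(x), \Subs(x))$ of $\rho_1$ is. Hence $\Tuple{\rho_1, \Subs}$ is never applicable (the third condition of Definition~\ref{definition:applicability} fails), so $\rho_1$ never participates in any chase sequence of any $\Tuple{\RuleSet, \FactSet}$; the chases of $\Tuple{\RuleSet, \FactSet}$ therefore coincide with those of $\Tuple{\{\rho_2\}, \FactSet}$, and the latter are all finite because each application of the EGD $\rho_2$ strictly decreases the (finite) number of distinct terms in the current atom set.

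The bulk of the work is to show that the chase of \emph{every} TGD set in $\Sing{\RuleSet}$ diverges. Since the body $R(x,x)$ of $\rho_1$ has the variable $x$ in exactly two positions, $\Sing{\rho_1}$ has exactly two members, obtained by keeping the first or the second occurrence:
\begin{align*}
\mu_1 &= R(x, x') \wedge \EP(x, x') \to \exists w_1, w_2 .\; R(w_1, x) \wedge R(x, w_2), \\
\mu_2 &= R(x', x) \wedge \EP(x, x') \to \exists w_1, w_2 .\; R(w_1, x) \wedge R(x, w_2),
\end{align*}
while $\rho_2$ singularises to $R(x,y) \to \EP(x,y)$; together with the reflexivity rule for $R$ and the rules \eqref{rule:symmetry} and \eqref{rule:transitivity} this yields the two rule sets in $\Sing{\RuleSet}$. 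For each of them I would run the chase on the ontology with fact set $\FactSet = \{R(a,b)\}$ and build an explicit infinite (fair) chase sequence, writing $f$ and $g$ for the two arity-one Skolem function symbols, so that firing the head of $\rho_1$ with $x \mapsto t$ adds $R(f(t),t)$ and $R(t,g(t))$. The key point is that $\EP(x,x')$ is derivable for any pair of endpoints of an $R$-edge (via $R(x,y)\to\EP(x,y)$, reflexivity, \eqref{rule:symmetry} and \eqref{rule:transitivity}), so, unlike the original body $R(x,x)$, the singularised body $R(x,x')$ of $\mu_1$ is satisfied at \emph{every} term that is the source of some $R$-edge. When $\mu_1$ is applied at such a term $t$, the fresh term $f(t)$ it introduces is again the source of an $R$-edge but has no incoming $R$-edge, so the head $\exists w_1, w_2 .\, R(w_1, f(t)) \wedge R(f(t), w_2)$ is not yet satisfied and $\mu_1$ is again applicable at $f(t)$; only that subsequent application creates an edge into $f(t)$. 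Iterating along $a, f(a), f(f(a)), \dots$ and interleaving the equality rules (which only touch $\EP$) gives an infinite fair chase sequence, so the chase of the rule set containing $\mu_1$ does not terminate. The rule set containing $\mu_2$ is handled symmetrically, following the chain $b, g(b), g(g(b)), \dots$ of terms that have an incoming but no outgoing $R$-edge.

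I expect the main obstacle to be designing $\rho_1$ so that \emph{both} singularisations diverge, not just one: the simpler candidate $R(x,x) \to \exists w .\, R(w,x)$ only loops in the singularisation that retains the ``source'' occurrence of $x$, and $R(x,x) \to \exists w .\, R(w,x) \wedge R(x,w)$ loops in neither, because the $2$-cycle produced by a single application already satisfies the head everywhere. Using two existential variables with head $R(w_1, x) \wedge R(x, w_2)$ is what guarantees that a freshly created source lacks an incoming edge \emph{and} a freshly created sink lacks an outgoing edge at the moment the rule would reapply to them, which is precisely what sustains each of the two infinite chains. The remaining care is routine but must be carried out: verifying that along each chain the head never becomes prematurely satisfied (this uses that $f(t)$ and $g(t)$ are fresh and that no $R$-edge ever points into $f(t)$ before $\mu_1$ fires there), and that the constructed sequences can be scheduled fairly (e.g.\ by exhausting all applicable reflexivity/symmetry/transitivity/singularised-EGD steps between consecutive applications of $\mu_1$).
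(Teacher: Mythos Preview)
Your proposal is correct and does establish the theorem, but the witness you construct is quite different from the one the paper uses. The paper takes
\[
\RuleSet = \{\, B(x) \wedge C(x) \to \exists y .\, R(x,y) \wedge B(y),\quad B(x) \wedge C(x) \to \exists z .\, R(x,z) \wedge C(z),\quad R(x,y) \to x \approx y \,\},
\]
argues termination of $\RuleSet$ by observing that the EGD immediately collapses every freshly introduced Skolem term back into its parent (so chase steps contain only terms of depth at most $1$), and then notes that each of the four rule sets in $\Sing{\RuleSet}$ diverges already on the fact set $\{B(a), C(a)\}$: singularising the conjunctive body $B(x)\wedge C(x)$ decouples the two occurrences of $x$, so after the first round the $B$-chain and the $C$-chain can feed each other across the $\EP$-class indefinitely. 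Your example instead exploits that singularising the \emph{diagonal} body $R(x,x)$ weakens it to the far more permissive $R(x,x')\wedge\EP(x,x')$, and you compensate for the two possible choices of kept occurrence by putting two existential variables in the head so that a fresh source and a fresh sink are created at every step. This buys you a smaller witness (two rules, one predicate, only two singularisations) and a truly immediate termination argument for $\RuleSet$ (the TGD is never applicable at all), at the price of a slightly more delicate divergence argument in which you must track, for each singularisation, a different infinite $f$- or $g$-chain and check that the head is not prematurely satisfied along it. The remaining verifications you flag (fairness via interleaving, and the non-satisfaction of the head at each fresh $f^{i}(a)$ resp.\ $g^{i}(b)$) go through exactly as you outline.
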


\begin{proof}
The chase of the rule $\RuleSet = \{\eqref{rule:theo2-1}, \eqref{rule:theo2-2}, \eqref{rule:theo2-3}\}$ does terminate whereas the chase every $\RuleSet' \in \Sing{\RuleSet}$ does not.
\begin{align}
B(x) \wedge C(x) &\to \exists y . R(x, y) \wedge B(y) \label{rule:theo2-1}\\
B(x) \wedge C(x) &\to \exists z . R(x, z) \wedge C(z) \label{rule:theo2-2}\\
R(x, y) &\to x \approx y \label{rule:theo2-3}
\end{align}
The chase of rule set \RuleSet does terminate as we have that, for any given fact set \FactSet, the chase of \Tuple{\RuleSet, \FactSet} may only contain terms of depth at most 1.

None of the four different rule sets in \Sing{\RuleSet} does terminate.
More precisely, for some $\RuleSet' \in \Sing{\RuleSet}$, the chase of \Tuple{\RuleSet', \{B(a), C(a)\}} does not terminate.
\end{proof}

\section{Extending \MFA with Equality}
\label{section:equality-mfa}

In this section, we present \emph{equality model-faithful acyclicity} (\EMFA), an acyclicity notion based on model-faithful acyclicity (MFA) \cite{mfa} that can be directly applied to existential rule sets with equality.

\begin{definition}
A term $t$ is \emph{cyclic} if it is of the form $f(\Vu)$ and the function symbol $f$ occurs in some term in \Vu.
\end{definition}

\begin{definition}[MFA/\EMFA]
\label{definition:emfa}
For a rule set \RuleSet, let $\EMFASet{\RuleSet}$ be the minimal atom set that satisfies the following.
\begin{itemize}
\item $\CI{\RuleSet} \subseteq \EMFASet{\RuleSet}$ with \CI{\RuleSet} the \emph{critical instance} for \RuleSet; that is, the set of all facts that can be defined using the predicates in \EntitiesIn{\Predicates}{\RuleSet} and the special constant $\star$.
\item If there is a TGD $\Body \to \exists \Vw. \Head \in \RuleSet$ and a substitution $\Subs$ with $\Body\Subs \subseteq \EMFASet{\RuleSet}$, then $\SkApp{\Subs}{\Head} \subseteq \EMFASet{\RuleSet}$.
\item If there is an EGD $\Body \to x \approx y \in \RuleSet$ and a substitution $\Subs$ with $\Body\Subs \subseteq \EMFASet{\RuleSet}$; then $\EMFASet{\RuleSet}[\Subs(y) / \Subs(x)] \subseteq \EMFASet{\RuleSet}$ if $\Depth{\Subs(x)} \leq \Depth{\Subs(y)}$, and $\EMFASet{\RuleSet}[\Subs(x) / \Subs(y)] \subseteq \EMFASet{\RuleSet}$ if $\Depth{\Subs(y)} \leq \Depth{\Subs(x)}$.
\end{itemize}
A rule set \RuleSet is \emph{\EMFA} if \EMFASet{\RuleSet} does not contain any cyclic terms.
A TGD set is \emph{\MFA} if it is \EMFA.
An ontology $\Tuple{\RuleSet, \FactSet}$ is \emph{\MFA} (resp. \emph{\EMFA}) if \RuleSet is \MFA (resp. \EMFA).
\end{definition}

Even though our definition of \MFA may seem different from its original definition, these two  coincide.
Namely, a TGD set \RuleSet is \MFA with respect to Definition 3 in \cite{mfa} iff it is \MFA with respect to the above definition.
This equivalence readily follows from Proposition 5 in \cite{mfa}.


In the remainder of the section, we show some results about the complexity of checking \EMFA membership and of reasoning with \EMFA rule sets.
We apply the following preliminary lemma in the proofs of some of these results.

\begin{lemma}
\label{lemma:aux}
For a rule set \RuleSet, the number of atoms without cyclic terms that can be defined using  \FirstItem a finite set of constants $C$, \SecondItem the (Skolem) function symbols in $\EntitiesIn{\Functions}{\Sk{\RuleSet}}$, and \ThirdItem the predicates in \EntitiesIn{\Predicates}{\RuleSet} is doubly exponential in \RuleSet.
\end{lemma}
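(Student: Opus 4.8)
The plan is to bound the number of non-cyclic terms first, and then lift that bound to atoms. Fix the rule set \RuleSet, write $k = \vert \EntitiesIn{\Functions}{\Sk{\RuleSet}} \vert$ for the number of Skolem function symbols, let $m$ be the maximal arity among these function symbols, let $n = \vert C \vert$, and let $p = \vert \EntitiesIn{\Predicates}{\RuleSet} \vert$ with maximal predicate arity $a$. Note that $k$, $m$, $p$, $a$ are all linearly bounded in (the size of) \RuleSet, since each Skolem symbol $f_w$ and its arity comes from a single existential variable $w$ of a single rule. The key combinatorial observation is that a non-cyclic term has bounded \emph{depth}: along any root-to-leaf path in the term tree of a non-cyclic term, no function symbol repeats, so the depth is at most $k+1$. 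Hence every non-cyclic term is a tree of height $\le k+1$ in which each internal node has at most $m$ children; the number of nodes in such a tree is at most $m^{k+1}$, and each node is labelled by one of $k$ function symbols (internal) or one of $n$ constants (leaf). This gives a bound of the form $(k+n)^{m^{k+1}}$ on the number of non-cyclic terms, which is singly exponential in $m^{k+1}$, i.e.\ doubly exponential in \RuleSet.

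Next I would lift this to atoms. An atom $P(\vec t)$ built from non-cyclic terms is determined by choosing a predicate $P$ (at most $p$ choices) and then, for each of its $\le a$ argument positions, a non-cyclic term. So the number of such atoms is at most $p \cdot N^{a}$, where $N$ is the bound on non-cyclic terms from the previous paragraph. Since $N$ is doubly exponential in \RuleSet and $p$, $a$ are polynomial (indeed linear) in \RuleSet, the product $p \cdot N^{a}$ remains doubly exponential in \RuleSet — raising a doubly-exponential quantity to a polynomial power, and multiplying by a polynomial, keeps it doubly exponential. This yields the claimed bound.

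The one subtlety worth being careful about — and the step I expect to be the main (minor) obstacle — is making the double-exponential bookkeeping precise and honest: one must check that "depth $\le k+1$ forces $\le m^{k+1}$ nodes" is stated with the right constants, and that $k$ and $m$ are genuinely \emph{polynomial} (here linear) in the size of \RuleSet, using the earlier remark ($\dagger$) that each existential variable, hence each Skolem symbol, belongs to exactly one rule, so the arity of $f_w$ equals the number of frontier variables of that rule and is bounded by the rule's size. Once these parameters are pinned down as polynomial in $\vert \RuleSet \vert$, the chain "polynomially many symbols $\Rightarrow$ exponentially bounded term depth $\Rightarrow$ exponentially bounded term size $\Rightarrow$ doubly-exponentially many terms $\Rightarrow$ doubly-exponentially many atoms" goes through routinely. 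No deeper idea is needed; the content is entirely the depth-boundedness of non-cyclic terms.
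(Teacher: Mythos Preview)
Your proposal is correct and follows essentially the same approach as the paper: bound the depth of terms without cyclic subterms by the number of Skolem symbols (since a repeated function symbol along a root-to-leaf path yields a cyclic subterm), deduce an exponential bound on the size of the term tree and hence a doubly-exponential bound on the number of such terms, then lift to atoms by raising to the maximal predicate arity and multiplying by the number of predicates. The paper's node count is $n \cdot m^n$ where your $m^{k+1}$ appears, but this is an inessential difference in bookkeeping; the structure and the key depth-bound observation are identical.
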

\begin{proof}
Firstly, we determine that the size of the set of all non-cyclic terms $T$ that can be defined using the symbols in $C$ and  \EntitiesIn{\Functions}{\Sk{\RuleSet}} is doubly exponential in \RuleSet.
Let $n = \vert \EntitiesIn{\Functions}{\Sk{\RuleSet}} \vert$ and let $m$ be the maximal arity of a function in \EntitiesIn{\Functions}{\Sk{\RuleSet}}.
By assumption $(\dagger)$ in the second section, we have that $\vert \EntitiesIn{\Functions}{\Sk{\RuleSet}} \vert \leq \vert \EntitiesIn{\Variables}{\RuleSet} \vert$.
For every $t \in \Terms$, let $r_t = t$ if $t \in \Constants$ and $r_t = f$ if $t$ is of the form $f(u_1, \ldots, u_n)$.
Moreover, let $G_t = (V_t, E_t)$ be a directed tree defined as follows: if $t \in \Constants$, then $G_t = (\{t\}, \emptyset)$; if $t$ is of the form $f(u_1, \ldots, u_n)$, then
$$G_t = (\{f\} \cup \bigcup_{1 \leq i \leq n} V_{u_i}, \{\Tuple{f, r_{u_i}} \mid 1 \leq i \leq n\} \cup \bigcup_{1 \leq i \leq n} E_{u_i}).$$
For every $t \in \Terms$, the maximal length of a path in $G_t$ is at most $n$, since a greater length implies the repetition of a function symbol, which in turn would imply that $t$ is cyclic.
Hence, $G_t$ is of depth at most $n$, fan-out at most $m$, and with leafs labelled with constants in $C$.
Such trees have at most $m^n$ leafs and at most $n \cdot m^n$ nodes in total.
As each node is assigned a constant or function symbol, there are at most $(\vert C \vert + n)^{(n \cdot m^n)}$ such trees and hence, non-cyclic terms, overall.
Therefore, $\vert T \vert$ is doubly exponential in \RuleSet.

Secondly, we determine the number of atoms without cyclic terms.
Let $k = \vert \EntitiesIn{\Predicates}{\RuleSet} \vert$ and let $\ell$ be the maximal arity of a predicate in \RuleSet.
Then, the number of atoms without cyclic terms is at most $k \vert T\vert^\ell = k(\vert C \vert + n)^{\ell \cdot n \cdot m^n}$.
\end{proof}

Despite the fact that \EMFA can be applied to rule sets with equality, checking membership with respect to this notion is not harder than deciding \MFA membership.

\begin{theorem}
\label{theorem:reasoning-membership}
Deciding whether a rule set \RuleSet is \EMFA is \DoubleExpTime-complete.
\end{theorem}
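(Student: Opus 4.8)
The plan is to prove the upper and lower bounds separately. For the \DoubleExpTime upper bound, I would detect whether $\EMFASet{\RuleSet}$ contains a cyclic term by computing the stages of the least fixpoint that defines it. Concretely, I would start from $\CI{\RuleSet}$ and repeatedly saturate the current atom set $S$ under the two closure rules of Definition~\ref{definition:emfa}: for every TGD $\Body \to \exists \Vw . \Head$ in \RuleSet and every substitution \Subs with $\Body\Subs \subseteq S$, add $\SkApp{\Subs}{\Head}$; and for every EGD $\Body \to x \approx y$ in \RuleSet and every \Subs with $\Body\Subs \subseteq S$, add the depth-directed rewriting of $S$ prescribed in Definition~\ref{definition:emfa}. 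After each addition one checks whether a cyclic term has appeared. Since the underlying operator is monotone, the iterates form an increasing chain whose limit is $\EMFASet{\RuleSet}$; every atom produced along the way lies in $\EMFASet{\RuleSet}$, and conversely every atom of $\EMFASet{\RuleSet}$ (each closure rule has a finite premise) is produced at some finite stage, so this procedure does find a cyclic term exactly when one exists in $\EMFASet{\RuleSet}$.

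The key bound is that, as long as no cyclic term has been produced, every atom in the current set is a cyclic-term-free atom over the single constant $\star$, the Skolem symbols of $\Sk{\RuleSet}$, and the predicates of \RuleSet; by Lemma~\ref{lemma:aux} applied with $C = \{\star\}$, there are at most doubly exponentially many such atoms in \RuleSet. Consequently the saturation can perform at most doubly exponentially many productive rounds before either a cyclic term surfaces---in which case $\EMFASet{\RuleSet}$ contains a cyclic term and we reject---or some round adds nothing new, in which case the current set is closed under all conditions of Definition~\ref{definition:emfa}, hence equals $\EMFASet{\RuleSet}$, is cyclic-term-free, and we accept. Each round examines, for every rule, all maps of its boundedly many body variables into the current set, which itself has at most doubly exponential size; this is doubly exponential work per round, so the whole procedure runs in \DoubleExpTime.

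For the matching \DoubleExpTime lower bound I would invoke the known fact that deciding \MFA membership for equality-free TGD sets is \DoubleExpTime-hard \cite{mfa}. By Definition~\ref{definition:emfa} a TGD set is \MFA iff it is \EMFA, so the identity map is a trivial polynomial-time reduction from \MFA membership for TGD sets to \EMFA membership for arbitrary rule sets, and hardness transfers.

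I expect the main obstacle to be the bookkeeping around the EGD closure rule in the upper bound: an EGD step rewrites every atom of the current set simultaneously rather than adding a single atom, so one must argue that this keeps the set inside the doubly-exponential universe of Lemma~\ref{lemma:aux} whenever it does not create a cyclic term, and that the fixpoint is still reached after doubly exponentially many rounds despite rewriting steps that merely relabel atoms---which follows once one observes that a round creating neither a cyclic term nor a new atom has already reached the fixpoint.
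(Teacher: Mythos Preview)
Your proposal is correct and follows essentially the same approach as the paper: compute $\EMFASet{\RuleSet}$ stage by stage until either a cyclic term appears or the fixpoint is reached, using Lemma~\ref{lemma:aux} with $C=\{\star\}$ to bound the number of cyclic-term-free atoms by a double exponential, and derive hardness by the trivial reduction from \MFA membership for TGD sets (which coincide with \EMFA by Definition~\ref{definition:emfa}). Your discussion of the EGD closure step is slightly more explicit than the paper's, but the underlying argument is the same.
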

\begin{proof}
(Membership)
To decide if \RuleSet is \EMFA, it suffices to compute the atom set \EMFASet{\RuleSet} up until the occurrence of an atom containing a cyclic term.
By Lemma~\ref{lemma:aux}, we may have to compute \EMFASet{\RuleSet} up until it contains doubly exponential many atoms.
To decide whether to include each of these atoms we have to determine whether there is a pair \Tuple{\Rule, \Subs} with $\Rule \in \RuleSet$ that is applicable to some subset of \EMFASet{\RuleSet}.
Checking if this is the case can be done in \DoubleExpTime and hence, the \EMFA membership check can be performed in \DoubleExpTime.

(Hardness)
An equality-free rule set (i.e., a TGD set) is \EMFA iff it is \MFA.
Hence, an algorithm that solves \EMFA membership can be used to decide \MFA membership.
Therefore, the hardness of the \EMFA check follows from the hardness of the \MFA membership check, which was shown to be \DoubleExpTime-hard by Cuenca Grau et al. \shortcite{mfa} (see Theorem~8).
\end{proof}

In the proof of the following result, we show that the chase is a decision procedure for BCQ entailment over \EMFA ontologies that runs in \DoubleExpTime.

\begin{theorem}
\label{theorem:bcq-membership}
Deciding BCQ entailment over an \EMFA ontology is in \DoubleExpTime.
\end{theorem}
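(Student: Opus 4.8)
The plan is to show that the chase itself, on input an \EMFA ontology $\On = \Tuple{\RuleSet, \FactSet}$ together with a BCQ $\BCQ$, is a \DoubleExpTime{} decision procedure. By Proposition~1, $\On \models \BCQ$ iff some (equivalently, any) chase of $\On$ entails $\BCQ$, so it suffices to \FirstItem{}construct one chase sequence, \SecondItem{}bound its length and the size of each chase step doubly exponentially in the size of $\On$ (so the whole computation runs in \DoubleExpTime), and \ThirdItem{}check $\BCQ$ against the final chase step, which is a conjunctive-query evaluation over a doubly-exponential structure and hence also feasible in \DoubleExpTime.

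The heart of the argument is the bound on the chase. First I would observe that every term appearing in any chase step of $\On$ is built from the constants of $\FactSet$ and the Skolem function symbols of $\Sk{\RuleSet}$, so the set of terms that can occur is a subset of the ground terms over this vocabulary. Next, the key structural claim: \emph{no chase step of $\On$ contains a cyclic term}. This should follow from Definition~\ref{definition:emfa} together with the standard ``\EMFASet{\RuleSet} is universal for the critical instance'' property of MFA-style constructions — concretely, one shows by induction on a chase sequence $\AtomSet_0, \AtomSet_1, \ldots$ that there is a homomorphism from each $\AtomSet_i$ into $\EMFASet{\RuleSet}$ that maps the constant $\star$-image consistently and commutes with Skolem terms, so a cyclic term appearing in some $\AtomSet_i$ would force a cyclic term into $\EMFASet{\RuleSet}$, contradicting \EMFA{}-ness. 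The case distinctions in Definition~\ref{definition:emfa} (EGD merging by depth) are designed to mirror exactly the $\prec$-based merging in Definition~\ref{definition:chase}, which is why the labelled-Skolem formulation was adopted; matching these two up carefully is the main technical obstacle, since one must track how the rewritings induced by EGD applications interact with the homomorphism into \EMFASet{\RuleSet}. Once the no-cyclic-term claim holds, Lemma~\ref{lemma:aux} (with $C$ the constants of $\FactSet$) gives that every chase step has at most doubly-exponentially many atoms, and since the chase is monotone (the only non-monotone operation, EGD rewriting, replaces a term everywhere and can only decrease or preserve the atom count, and afterwards no already-satisfied rule instance becomes applicable again on the same substitution), the number of chase steps is also at most doubly exponential in $\On$.

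Finally I would assemble the complexity bound: computing one chase sequence requires doubly-exponentially many steps, each step performing a rule-applicability search (a conjunctive-pattern match of a bounded rule body against a doubly-exponential atom set, doable in \DoubleExpTime{}) and at most a linear-in-$\vert\AtomSet_i\vert$ rewriting pass; the grand total stays in \DoubleExpTime. The resulting finite chase $\Chase$ is a universal model of $\On$, so by Proposition~1 we have $\On \models \BCQ$ iff $\Chase \models \BCQ$, and the latter is decided by evaluating the fixed-size query $\BCQ$ over $\Chase$, again within \DoubleExpTime. Hence BCQ entailment over \EMFA{} ontologies is in \DoubleExpTime. The step I expect to be the genuine obstacle is the inductive no-cyclic-term claim, specifically reconciling the depth-based EGD tie-breaking of the chase with the depth-based merging in the definition of \EMFASet{\RuleSet}; everything downstream of that claim is routine bookkeeping with Lemma~\ref{lemma:aux}.
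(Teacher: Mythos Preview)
Your proposal is correct and follows essentially the same approach as the paper: the paper also proves, by induction along a chase sequence, that $\AtomSet_i\Rewriting_\star \subseteq \EMFASet{\RuleSet}$ where $\Rewriting_\star$ is the specific homomorphism that replaces every constant by $\star$ (this is the ``homomorphism into \EMFASet{\RuleSet}'' you describe abstractly), then invokes Lemma~\ref{lemma:aux} and the non-reintroduction of EGD-removed atoms to bound the chase in \DoubleExpTime. You have also correctly identified the EGD case---matching the $\prec$-based merging of Definition~\ref{definition:applicability} against the depth-based merging of Definition~\ref{definition:emfa} via $\Subs(x) \prec \Subs(y) \Rightarrow \Depth{\Subs(x)} \leq \Depth{\Subs(y)}$---as the only non-routine step.
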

\begin{proof}
The above result follows from the fact that, for any (arbitrarily chosen) chase \Chase of an \EMFA ontology $\On = \Tuple{\RuleSet, \FactSet}$, the atom set \Chase does not contain any cyclic terms.
Hence, we can show that this atom set \Chase can be computed in \DoubleExpTime with an analogous argument to the one that is used in the ``Membership'' part of the proof of Theorem~\ref{theorem:reasoning-membership} to show that \EMFASet{\RuleSet} can be computed in \DoubleExpTime.
Note that, once an atom is removed from a chase sequence due to the application of a tuple with an EGD, it may never be reintroduced in any descendant in the sequence by Definition~\ref{definition:chase}.
We show that \Chase does not contain cyclic terms via indcution.

By Definition~\ref{definition:chase}, $\Chase$ is the union of all of the sets in some chase sequence $\AtomSet_0, \AtomSet_1, \ldots$ of $\On$.
The fact that \Chase does not contain cyclic terms follows from the following claim:
for all $i \geq 0$, $\AtomSet_i \Rewriting_\star \subseteq \EMFASet{\RuleSet}$ with $\Rewriting_\star$ the rewriting that maps every ground term $t$ to the term that results from replacing every syntactic occurrence of a constant with $\star$ (e.g., $A(a, f(a))\Rewriting_\star = A(\star, f(\star))$ where $a \in \Constants$).
Note that, if \RuleSet is \EMFA, then $\EMFASet{\RuleSet}$ does not contain cyclic terms.

(Base case)
By Definition~\ref{definition:chase}, $\AtomSet_0 = \FactSet$.
By Definition~\ref{definition:emfa}, $\CI{\RuleSet} \subseteq \EMFASet{\RuleSet}$ and hence, $\AtomSet_0\Rewriting_\star \subseteq \EMFASet{\RuleSet}$ since $\AtomSet_0\Rewriting_\star \subseteq \CI{\RuleSet}$.

(Inductive step)
Let $i \geq 1$.
Then, there is a rule $\Rule \in \RuleSet$ and a substitution \Subs such that \Tuple{\Rule, \Subs} is applicable to $\AtomSet_{i-1}$ and $\AtomSet_i$ is the application of \Tuple{\Rule, \Subs} on $\AtomSet_{i-1}$.
By induction hypothesis, we have that $\AtomSet_{i-1}\Rewriting_\star \subseteq \EMFASet{\RuleSet}$.
We consider two different cases depending on whether \Rule is a TGD or an EGD.
\begin{itemize}
\item Let \Rule be a TGD; that is, this rule is of the form $\Body \to \exists \Vw . \Head$.
Then, $\Body\Subs \subseteq \AtomSet_{i-1}$ and $\AtomSet_i = \AtomSet_{i-1} \cup \{\SkApp{\Subs}{\Head}\}$.
Moreover, $(\Body\Subs)\Rewriting_\star \subseteq \EMFASet{\RuleSet}$ since $\AtomSet_{i-1}\Rewriting_\star \subseteq \EMFASet{\RuleSet}$.
Hence, $(\SkApp{\Subs}{\Head}) \Rewriting_\star \subseteq \EMFASet{\RuleSet}$, and $\AtomSet_i\Rewriting_\star$ is a subset of $\EMFASet{\RuleSet}$.

\item Let \Rule be an EGD; that is, this rule is of the form $\Body \to x \approx y$.
Then, $\Body\Subs \subseteq \AtomSet_{i-1}$, and $\Body\Subs\Rewriting_\star \subseteq \EMFASet{\RuleSet}$ since $\AtomSet_{i-1}\Rewriting_\star \subseteq \EMFASet{\RuleSet}$.
We consider two different cases.
\begin{itemize}
\item $\Subs(x) \prec \Subs(y)$ with $\prec$ the strict total order introduced before Definition~\ref{definition:applicability}.
Then, $\AtomSet_i = \AtomSet_i[\Subs(y) / \Subs(x)]$.
Moreover, $\EMFASet{\RuleSet}[\Rewriting_\star(\Subs(y)) / \Rewriting_\star(\Subs(x))] \subseteq \EMFASet{\RuleSet}$ since $\Subs(x) \prec \Subs(y)$ implies $\Depth{\Subs(x)} \leq \Depth{\Subs(y)}$.
\item $\Subs(y) \prec \Subs(x)$.
Analogous to the previous case.
\end{itemize}
\end{itemize}
In either case, $\AtomSet_i\Rewriting_\star$ is a subset of $\EMFASet{\RuleSet}$.
\end{proof}

As implied by the  following result, the chase is a worst-case optimal procedure to reason with \EMFA ontologies.

\begin{theorem}
\label{theorem:bcq-hard}
Deciding BCQ entailment for ontologies \Tuple{\RuleSet, \FactSet} with \RuleSet an \EMFA rule set is \DoubleExpTime-hard.
\end{theorem}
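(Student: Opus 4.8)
The plan is to obtain hardness by reduction from BCQ entailment over \MFA ontologies, which is already known to be \DoubleExpTime-hard by Cuenca Grau et al.~\shortcite{mfa}. The key observation is that \MFA is a special case of \EMFA: an equality-free rule set (i.e., a TGD set) is \MFA if and only if it is \EMFA, as recorded immediately after Definition~\ref{definition:emfa}. Hence every \MFA ontology \Tuple{\RuleSet, \FactSet} with \RuleSet a TGD set is, verbatim, an \EMFA ontology, and $\On \models \BCQ$ holds under the same semantics in both settings. So the identity map is already a correct polynomial-time (indeed, logarithmic-space) reduction from the \MFA BCQ-entailment problem to the \EMFA one, and \DoubleExpTime-hardness transfers directly.

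The first step is therefore to invoke Theorem~8 of \cite{mfa}, which gives a family of TGD sets \RuleSet, fact sets \FactSet, and BCQs \BCQ such that each \RuleSet is \MFA, the triple has size polynomial in the input of some \DoubleExpTime-complete problem, and $\Tuple{\RuleSet, \FactSet} \models \BCQ$ encodes acceptance. Second, I would note that these \RuleSet contain no EGDs and no occurrences of $\approx$, so by the equivalence above each \RuleSet is \EMFA; consequently \Tuple{\RuleSet, \FactSet} is an \EMFA ontology in the sense of Definition~\ref{definition:emfa}. Third, since the FOL semantics used to define $\On \models \BCQ$ in the Preliminaries does not depend on whether equality appears, the entailment answer is unchanged, so the reduction is correct. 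This yields \DoubleExpTime-hardness of BCQ entailment over \EMFA ontologies.

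There is essentially no obstacle here beyond bookkeeping: one must only confirm that the lower-bound instances of \cite{mfa} genuinely fall within our syntactic class (function- and constant-free rules, a fact set that is equality-free with predicates drawn from the rule set, and a constant-free equality-free BCQ), which they do since they are built from plain TGDs. Combined with the matching upper bound of Theorem~\ref{theorem:bcq-membership}, this shows BCQ entailment over \EMFA ontologies is \DoubleExpTime-complete, so the chase is a worst-case optimal decision procedure for this class.
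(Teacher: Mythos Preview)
Your argument is correct and, in spirit, very close to what the paper does: hardness is inherited from a subclass of \EMFA for which \DoubleExpTime-hardness of BCQ entailment is already known. The paper, however, does not go through \MFA and \cite{mfa}; instead it points to the Turing-machine encoding for \emph{weakly acyclic} rules from \cite{qa-non-guarded-rules} (with details in Theorem~3 of \cite{rmfa}). Since weakly acyclic TGD sets are \MFA and hence \EMFA, that construction lands inside \EMFA as well. Your route via the \MFA lower bound is equally valid and arguably tidier, because the paper has already recorded the ``\MFA $=$ \EMFA on TGD sets'' equivalence after Definition~\ref{definition:emfa} and used the same inclusion trick in the hardness half of Theorem~\ref{theorem:reasoning-membership}.

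One small caution: you invoke ``Theorem~8 of \cite{mfa}'' for \DoubleExpTime-hardness of BCQ entailment over \MFA ontologies, but in the present paper Theorem~8 of \cite{mfa} is cited for hardness of the \MFA \emph{membership} check (see the proof of Theorem~\ref{theorem:reasoning-membership}). The BCQ-entailment lower bound for \MFA is indeed established in \cite{mfa}, but you should double-check the theorem number before citing it; otherwise the argument is sound.
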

\begin{proof}
Hardness is established by modifying the construction of a \DoubleExpTime Turing machine given for weakly acyclic rules by \cite{qa-non-guarded-rules}.
For a more detailed explanation of this argument, see the proof of Theorem~3 in \cite{rmfa}.
\end{proof}

In the remainder of the section, we present some results and examples that illustrate the expressivity of \EMFA compared to that of using \MFA over axiomatised rule sets.

\begin{theorem}
A rule set \RuleSet is \EMFA if \St{\RuleSet} is \MFA.
\end{theorem}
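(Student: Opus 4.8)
The plan is to lean on the fact that, by Definition~\ref{definition:emfa}, $\EMFASet{\RuleSet}$ is the \emph{least} atom set closed under the three displayed conditions for \RuleSet. Consequently, to bound $\EMFASet{\RuleSet}$ from above it suffices to exhibit \emph{one} atom set $S$ that is closed under those three conditions and that provably contains no cyclic term; then $\EMFASet{\RuleSet}\subseteq S$, and \RuleSet is \EMFA. I would pick $S=\EMFASet{\St{\RuleSet}}$, adding the single atom $\star\approx\star$ in case $\approx\in\EntitiesIn{\Predicates}{\RuleSet}$ (so that $\CI{\RuleSet}$ is covered, even though $\EMFASet{\St{\RuleSet}}$ talks about \EP instead of $\approx$). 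Since $\St{\RuleSet}$ is \MFA, $\EMFASet{\St{\RuleSet}}$ has no cyclic term, and $\star\approx\star$ contains only the constant $\star$; so $S$ has no cyclic term and we are done. One preliminary point I would state explicitly is that such a least set genuinely exists: the intersection of any family of atom sets each satisfying the three conditions again satisfies them --- the only non-obvious case, the equality condition, is preserved because if $\Atom$ and a matching instance of an EGD body both lie in $\bigcap_i S_i$, then the corresponding rewrite of $\Atom$ lies in $S_i$ after rewriting, for every $i$.

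Verifying the first two conditions for $S$ is routine. For $\CI{\RuleSet}\subseteq S$: every non-equality fact over $\star$ uses a predicate in $\EntitiesIn{\Predicates}{\RuleSet}\setminus\{\approx\}\subseteq\EntitiesIn{\Predicates}{\St{\RuleSet}}$, hence already lies in $\CI{\St{\RuleSet}}\subseteq\EMFASet{\St{\RuleSet}}$, and the only equality fact over $\star$ is $\star\approx\star\in S$. For the TGD condition: every TGD of \RuleSet is verbatim a TGD of \St{\RuleSet} with the same Skolemisation, and TGD bodies are equality-free, so any substitution mapping such a body into $S$ maps it into $\EMFASet{\St{\RuleSet}}$, and closure of $\EMFASet{\St{\RuleSet}}$ places the Skolemised head in $S$.

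The substance of the proof is the EGD condition. Given an EGD $\Body\to x\approx y\in\RuleSet$ and a substitution \Subs with $\Body\Subs\subseteq S$, I must show $\EMFASet{\St{\RuleSet}}[\Subs(y)/\Subs(x)]\subseteq S$ when $\Depth{\Subs(x)}\le\Depth{\Subs(y)}$ (and symmetrically otherwise). Here I would use that \St{\RuleSet} contains the TGD $\Body\to\EP(x,y)$ together with the symmetry, transitivity and replacement axioms of Definition~\ref{definition:standard}: the body match yields $\EP(\Subs(x),\Subs(y))\in\EMFASet{\St{\RuleSet}}$, and symmetry yields $\EP(\Subs(y),\Subs(x))$. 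Then, for a non-\EP atom $P(\vec{t})\in\EMFASet{\St{\RuleSet}}$, I would iterate the replacement rule for $P$ over the argument positions carrying $\Subs(y)$ --- each intermediate atom remaining in $\EMFASet{\St{\RuleSet}}$ by closure --- ending exactly at $P(\vec{t})[\Subs(y)/\Subs(x)]$; for an \EP atom the same rewriting is produced from $\EP(\Subs(x),\Subs(y))$, $\EP(\Subs(y),\Subs(x))$ and transitivity; and $\star\approx\star$ is untouched, because $\Subs(y)=\star$ is impossible here (the only constant anywhere in $S$ is $\star$, so $\Depth{\Subs(x)}\le\Depth{\Subs(y)}=1$ would force $\Subs(x)=\Subs(y)=\star$ and render the rewriting trivial).

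I expect the EGD case to be the only delicate point: one must check that the depth-oriented merge built into Definition~\ref{definition:emfa} is faithfully simulated on the axiomatisation side by the purely \emph{unoriented} interplay of symmetry, transitivity and replacement, and, in particular, that rewriting an \EP atom never leaves $\EMFASet{\St{\RuleSet}}$ (so that $S$ is closed under the original EGD, not merely under its non-\EP fragment). Everything else is bookkeeping, and the theorem then follows at once from $\EMFASet{\RuleSet}\subseteq S$ and the fact that $S$ has no cyclic term.
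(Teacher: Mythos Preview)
Your proposal is correct and is essentially the same argument as the paper's, just phrased via the minimality characterisation of $\EMFASet{\RuleSet}$ rather than via an explicit induction on a construction sequence $\EMFASet{\RuleSet}^0,\EMFASet{\RuleSet}^1,\ldots$; verifying that $S$ is closed under the three conditions of Definition~\ref{definition:emfa} is exactly the content of the paper's inductive step. You are in fact slightly more careful than the paper in two places: you explicitly handle the stray atom $\star\approx\star\in\CI{\RuleSet}$ (the paper's claim $\CI{\RuleSet}\subseteq\CI{\St{\RuleSet}}$ tacitly ignores it), and you spell out why rewriting an \EP atom stays inside $\EMFASet{\St{\RuleSet}}$ via symmetry and transitivity, whereas the paper's appeal to the replacement rules~\eqref{rule:replacement} does not literally cover \EP.
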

\begin{proof}
Let $\EMFASet{\RuleSet}^0 = \CI{\RuleSet}, \EMFASet{\RuleSet}^1, \ldots$ be a sequence consisting of all the intermediate sets that are computed to construct the set \EMFASet{\RuleSet} by applying the rules defined in Definition~\ref{definition:emfa}.
We show that $\EMFASet{\RuleSet}^i \subseteq \EMFASet{\St{\RuleSet}}$ for all $i \geq 1$ via induction.
Hence, $\EMFASet{\RuleSet} \subseteq \EMFASet{\St{\RuleSet}}$ and the theorem follows.

(Base Case)
By Definition~\ref{definition:emfa}, $\CI{\St{\RuleSet}} \subseteq \EMFASet{\St{\RuleSet}}$.
Therefore, we have that $\EMFASet{\RuleSet}^0 \subseteq \EMFASet{\St{\RuleSet}}$ since $\CI{\RuleSet} \subseteq \CI{\St{\RuleSet}}$.

(Inductive Step)
Let $i \geq 1$.
Then, by IH we have that $\EMFASet{\RuleSet}^{i-1} \subseteq \EMFASet{\St{\RuleSet}}$.
We consider the following cases.
\begin{itemize}
\item $\EMFASet{\RuleSet}^{i} = \SkApp{\Subs}{\Head} \cup \EMFASet{\RuleSet}^{i-1}$ where $\Head$ is the head of some TGD $\Rule = \Body \to \exists \Vw . \Head \in \RuleSet$ and $\Subs$ is some substitution such that $\Body\Subs \subseteq \EMFASet{\RuleSet}^{i-1}$.
By Definition~\ref{definition:standard}, $\Rule \in \St{\RuleSet}$.
Since $\Body\Subs \subseteq \EMFASet{\St{\RuleSet}}$ by IH, $\SkApp{\Subs}{\Head} \subseteq \EMFASet{\St{\RuleSet}}$.
\item $\EMFASet{\RuleSet}^i = \EMFASet{\RuleSet}^{i-1}[\Subs(x) / \Subs(y)] \cup \EMFASet{\RuleSet}^{i-1}$, there is an EGD $\Body \to x \approx y \in \RuleSet$ and a substitution $\Subs$ with $\Body\Subs \subseteq \EMFASet{\RuleSet}^{i-1}$, and $\Depth{\Subs(x)} \leq \Depth{\Subs(y)}$.
By Definition~\ref{definition:standard}, $\Body \to \EP(x, y) \in \St{\RuleSet}$.
Since $\Body\Subs \subseteq \EMFASet{\St{\RuleSet}}$ by IH, $\EP(x, y)\Subs \in \EMFASet{\St{\RuleSet}}$.
Since TGD $\eqref{rule:symmetry} \in \St{\RuleSet}$, $\EP(\Subs(y), \Subs(x)) \in \EMFASet{\St{\RuleSet}}$.
Because of the rules of type \eqref{rule:replacement} in \St{\RuleSet}, $\varphi[\Subs(x) / \Subs(y)] \in \EMFASet{\St{\RuleSet}}$ for all $\varphi \in \EMFASet{\St{\RuleSet}}$.
\item $\EMFASet{\RuleSet}^i = \EMFASet{\RuleSet}^{i-1}[\Subs(y) / \Subs(x)] \cup \EMFASet{\RuleSet}^{i-1}$, there is an EGD $\Body \to x \approx y \in \RuleSet$ and a substitution $\Subs$ with $\Body\Subs \subseteq \EMFASet{\RuleSet}^{i-1}$, and $\Depth{\Subs(y)} \leq \Depth{\Subs(x)}$.
Analogous to the previous case.
\end{itemize}
In either case, $\EMFASet{\RuleSet}^i \subseteq \EMFASet{\St{\RuleSet}}$.
\end{proof}

As shown by the following example, the converse of the above theorem does not hold.

\begin{example}
The rule set \RuleSet from Example~\ref{example:1} is \EMFA, but the TGD set $\St{\RuleSet}$ is not \MFA.
\end{example}

To conclude the section, we introduce some examples that illustrate the generality of \EMFA versus that of applying \MFA over singularised rule sets.
For instance, there are rule sets that are \EMFA, but no singularisation of these are \MFA.

\begin{example}
The rule set \RuleSet containing all of the following rules is \EMFA, but no TGD set in \Sing{\RuleSet} is \MFA.
\begin{align}
A(x) &\to \exists v . R(x, v) \wedge B(v) \\
A(x) &\to \exists w . S(x, w) \wedge C(w) \\
C(x) \wedge B(x) &\to A(x) \\
R(x, y) &\to x \approx y \\
S(x, y) &\to x \approx y
\end{align}
For example, the chase of an ontology $\Tuple{\RuleSet, \FactSet}$ with $\RuleSet \in \Sing{\RuleSet}$ and $\FactSet = \{A(a), R(a, a), S(a, a)\}$ does not terminate irrespectively of \RuleSet.
Note that, this is the case even though all such ontologies admit finite chases.
Therefore, neither TGD set in \Sing{\RuleSet} is \MFA.
\end{example}

Furthermore, there are rule sets that are not \EMFA, but all of their singularisations are \MFA.

\begin{example}
\label{example:last}
Even though the rule set \RuleSet with all of the following rules is not \EMFA, all of the sets in $\Sing{\RuleSet}$ are \MFA.
\begin{align}
A(x) &\to \exists v . R(x, v) \wedge B(v) \\
B(x) &\to \exists w . R(x, w) \wedge C(w) \\
R(x, y) \wedge R(x, z) &\to y \approx z
\end{align}
Note that, the two TGD sets in \Sing{\RuleSet} are equivalent.
\end{example}


\section{Evaluation}
\label{section:evaluation}

\begin{figure*}
    \small
\begin{center}
\small{\begin{tabular}{ c |@{\hspace{0.25\tabcolsep}} c |@{\hspace{0.25\tabcolsep}} c |@{\hspace{0.25\tabcolsep}} c}
\backslashbox{{\small \#TGDs}}{{\small \#EGDs}}
						& $[1, 2]$			& [3, 7] 			& $[\geq 8]$ \\\hline
\multirow{2}{*}{$[1, 2]$}		& 106 / 101 / 101	& 56 / 50 / 50		& 8 / 7 / 7 \\
						& 6ms / 9ms		& 13ms / 21ms		& 28ms / 46ms \\ \hline
\multirow{2}{*}{$[3, 15]$}		& 89 / 65 / 65		& 92 / 74 / 74		& 87 / 66 / 66 \\
						& 25ms / 51ms		& 31ms / 71ms		& 41ms / 65ms \\ \hline
\multirow{2}{*}{$[\geq 16]$}	& 13 / 7 / 7		& 12 / 3 / 3		& \textbf{102 / 14 / 11} \\
						& 404ms / 606ms	& 56ms / 94ms		& 2.7s / 7.6s
\end{tabular}}
\quad
\small{\begin{tabular}{ c |@{\hspace{0.25\tabcolsep}} c |@{\hspace{0.25\tabcolsep}} c |@{\hspace{0.25\tabcolsep}} c}
\backslashbox{{\small \#TGDs}}{{\small \#EGDs}}
					& $[1]$			& [2, 3] 			& $[\geq 4]$ \\\hline
\multirow{2}{*}{$[1]$}		& 5 / 5 / 5			& 1 / 1 / 1			& 1 / 1 / 1 \\
					& 2ms / 4ms		& 5ms / 11ms		& 6ms / 10ms \\ \hline
\multirow{2}{*}{$[2, 5]$}	& 6 / 5 / 5			& 9 / 8 / 8			& 5 / 5 / 5 \\
					& 10ms / 18ms		& 5ms / 9ms		& 10ms / 15ms \\ \hline
\multirow{2}{*}{$[\geq 6]$}	& 82 / 9 / 9		& 2 / 2 / 2			& 20 / 2 / 2 \\
					& 62s / 78s		& 16ms / 27ms		& \textbf{15s / 124s}
\end{tabular}}
\end{center}
\caption{MOWLCorp (left) and Oxford Ontology Repository (right) results, see Summary~\ref{results} for an explanation of the above}
\label{table:experiments}
\end{figure*}

We performed a number of experiments to verify, from an empirical perspective, claims (\ClaimSing--\ClaimEMFAPerformance) stated in the introduction.
All the used rule sets are available online.\footnote{\url{github.com/dcarralma/existential-rule-sets-with-equality}}

To verify (\ClaimSing), we implemented the ``renaming'' chase variant presented in Definition~\ref{definition:chase} in VLog \cite{vlog}, which is an efficient rule engine for existential rules \cite{vlog-chase}.
Then, we checked if using this procedure to compute the chase over an ontology $\Tuple{\RuleSet, \FactSet}$ is more efficient than computing the chase of an ontology $\Tuple{\RuleSet', \FactSet}$ with $\RuleSet'$ some arbitrarily chosen rule set in \Sing{\RuleSet}.
For this experiment we considered two ontologies---DBPedia \cite{dbpedia} and Claros \cite{claros}---that we obtained from the evaluation in \cite{rewriting}.
In either case, the performance of the ``renaming'' chase was far superior: we can compute the chase of DBPedia in 27.5s when using renaming to deal with equality; computing the chase of a (randomly selected) singularisation of this ontology takes 590s.
We get similarly lopsided results for Claros: 11.7s when using renaming; 67.4s with singularisation.

\citeauthor{rewriting} \shortcite{rewriting} showed that using renaming to deal with equality is more efficient than applying the standard axiomatisation.
Therefore, we conclude that the use of axiomatisations (singularisation and standard) results in poor performance; a fact that validates the practical usefulness of Theorems~\ref{theorem:standard} and \ref{theorem:singularisation}.
As discussed in the introduction, the application of these results allows us to check acyclicity with respect to some axiomatisation of a rule set, and then use the original rule set for computing the chase.

To verify claims (\ClaimStandard--\ClaimEMFAPerformance), we use Description Logics TBoxes \cite{dl-handbook} from MOWLCorp \cite{mowl-corpus} and the Oxford Ontology Repository\footnote{\url{www.cs.ox.ac.uk/isg/ontologies/}} (OOR).
First, we normalise these TBoxes by structural decomposition of complex axioms into the normal form considered in \cite{elifying} and subsequently filter all TBoxes with non-deterministic features or nominals.
Then, we apply a standard translation into FOL to obtain equivalent existential rule sets (see Section 6 in \cite{mfa}).
Finally, we discard rule sets that do not contain at least one EGD and at least one TGD with existential quantified variables.

We implemented the \EMFA and \MFA checks in VLog and applied them to the rule sets from MOWLCorp and OOR.
In summary, we obtained the following results:
\begin{itemize}
\item MOWLCorp: out of a total of 565 rule sets after preprocessing, we found that 387 rule sets are \EMFA.
Moreover, 73 and 384 rule sets are \MFA if we apply the standard axiomatisation and singularisation, respectively.
\item OOR: out of 131 rule sets, we found that that 38 rule sets are \EMFA.
Also, 38 are \MFA if singularisation is applied.
\end{itemize}
For each rule set \RuleSet, we compute a single (arbitrarily chosen) set in \Sing{\RuleSet} when testing singularisation.
Given the poor performance of the \MFA check with the standard axiomatisation for MOWLCorp, we did not consider this technique again when we repeated the experiment with OOR.

From the above results, we can readily verify claim (\ClaimStandard).
All rule sets classified as \MFA when applying either axiomatisation technique were also found to be \EMFA and therefore, we consider that claim (\ClaimEMFAGenerality) was also validated.

To verify (\ClaimEMFAPerformance), we measured the time that took to perform each check.
On average, the \EMFA check takes 35\% and 60\% of the time taken by the ``\MFA{} + singularisation'' check for the rule sets in MOWLCorp and OOR, respectively.
We present a more detailed analysis in Figure~\ref{table:experiments}.
\begin{results}
\label{results}
In each cell in Figure~\ref{table:experiments}, we include information about some subset of the rule sets in MOWLCorp (left table) or OOR (right table).
For example, the upper right cell in the left table contains the counts and average times for the rule sets in MOWLCorp that contain at least 8 EGDs and between $1$ and $2$ TGDs with existential quantifiers.
Each cell contains two lines: the first features the total count of rule sets included, as well as the number of \EMFA and ``\MFA{} + singularisation'' successful checks; the second one includes two values which indicate the average times taken by the \EMFA and the ``\MFA{} + singularisation'' tests.
\end{results}

Note that, \EMFA outperforms ``\MFA{} + singularisation''  by almost an order of magnitude on average for the hardest rule sets considered (i.e., lower right cell in the right table). Moreover, in a small number of cases, \EMFA succeeded when \MFA failed (lower right cell in the left table).

\section{Related Work and Conclusions}
\label{section:conclusions}

A previously existing acyclicity notion that can be directly applied to rule sets with EGDs is \emph{weak acyclicity} \cite{wa}.
Alas, this notion is significantly less general than checking \MFA membership over singularised rule sets in practice (see Section~7 of \cite{mfa}).

As for future work, we plan to extend \emph{restricted model-faithful acyclicity} (\RMFA) \cite{rmfa}, an acyclicity notion for the \emph{Datalog-first restricted chase}, so it can be applied to rule sets with equality.
Since \RMFA is more general than \MFA, this extension can yield an even more general condition applicable for rule sets with EGDs.
To verify that this notion captures all possible rule sets with a terminating chase, we plan to develop a \emph{cyclicity notion} such as the one presented in \cite{rmfa}.
That is, a sufficient condition that can detect if the chase does not terminate for a given rule set.

In this paper, we have presented several results that we believe are of theoretical interest and of practical usefulness regarding chase termination of rule sets with EGDs.
In particular, we believe that Theorems~\ref{theorem:standard} and \ref{theorem:singularisation} are very useful, as they enable the application of all existing acyclicity notions to existential rule sets with equality.



\paragraph*{Acknowledgments}
This work is funded by Deutsche Forschungsgemeinschaft (DFG) grant 389792660 as part of TRR~248 (see \url{www.perspicuous-computing.science}) and by the NWO research programme 400.17.605 (VWData).
We also thank Irina Dragoste for her useful comments.

\bibliographystyle{style/aaai}
\bibliography{bibliography/reference}

\begin{tr}
\newpage

\onecolumn

\begin{list}{}{
\setlength{\topsep}{0pt}
\setlength{\leftmargin}{1cm}
\setlength{\rightmargin}{1cm}
\setlength{\listparindent}{\parindent}
\setlength{\itemindent}{\parindent}
\setlength{\parsep}{\parskip}}
\item[]

\makeatother
\appendix
\large

\section{Formal Proofs}

In this section we include the proofs for Theorems~\ref{theorem:standard} and \ref{theorem:singularisation}.
Moreover, we also include and proof the two following auxiliary lemmas, which are later used in the proof of Theorem~\ref{theorem:standard}.

\begin{lemma}
\label{lemma:aux}
Consider an \EPComplete atom set \AtomSet and some terms $t, u \in \EntitiesIn{\Terms}{\AtomSet}$.
If the atom $\EP(t, u)$ is in \AtomSet, then $\Rewriting_\AtomSet(t) = \Rewriting_\AtomSet(u)$.
\end{lemma}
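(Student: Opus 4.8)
The plan is to unwind the definition of $\Rewriting_\AtomSet$ and exploit the closure of \AtomSet under symmetry and transitivity. By Definition~\ref{definition:rewriting-standard}, $\Rewriting_\AtomSet(t)$ is a term of $\EntitiesIn{\Terms}{\AtomSet}$ satisfying $\EP(t, \Rewriting_\AtomSet(t)) \in \AtomSet$ and $\prec$-minimal with this property; that is, $\Rewriting_\AtomSet(t) \preceq v$ (writing $\preceq$ for the reflexive closure of $\prec$) for every $v \in \EntitiesIn{\Terms}{\AtomSet}$ with $\EP(t, v) \in \AtomSet$, and analogously for $u$. Since $\prec$ is a strict total order, it suffices to establish both $\Rewriting_\AtomSet(t) \preceq \Rewriting_\AtomSet(u)$ and $\Rewriting_\AtomSet(u) \preceq \Rewriting_\AtomSet(t)$.

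For the first inequality, I would combine $\EP(t, u) \in \AtomSet$ (the hypothesis) with $\EP(u, \Rewriting_\AtomSet(u)) \in \AtomSet$ (from the definition of $\Rewriting_\AtomSet(u)$): since \AtomSet is \EPComplete it satisfies the transitivity rule~\eqref{rule:transitivity}, so instantiating the body of that rule with $x \mapsto t$, $y \mapsto u$, $z \mapsto \Rewriting_\AtomSet(u)$ gives $\EP(t, \Rewriting_\AtomSet(u)) \in \AtomSet$, whence $\Rewriting_\AtomSet(t) \preceq \Rewriting_\AtomSet(u)$ by $\prec$-minimality of $\Rewriting_\AtomSet(t)$. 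For the second inequality, I would first use that \AtomSet satisfies the symmetry rule~\eqref{rule:symmetry} to turn $\EP(t, u) \in \AtomSet$ into $\EP(u, t) \in \AtomSet$, then combine this with $\EP(t, \Rewriting_\AtomSet(t)) \in \AtomSet$ via~\eqref{rule:transitivity} to get $\EP(u, \Rewriting_\AtomSet(t)) \in \AtomSet$, and finally apply $\prec$-minimality of $\Rewriting_\AtomSet(u)$ to obtain $\Rewriting_\AtomSet(u) \preceq \Rewriting_\AtomSet(t)$. The two inequalities and antisymmetry of $\preceq$ then give $\Rewriting_\AtomSet(t) = \Rewriting_\AtomSet(u)$.

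I do not expect a real obstacle here: the argument is just two uses of transitivity, one use of symmetry, and the minimality clauses. The only things that need a word of justification are bookkeeping ones — that $t$, $u$, $\Rewriting_\AtomSet(t)$, $\Rewriting_\AtomSet(u)$ all lie in $\EntitiesIn{\Terms}{\AtomSet}$ (the first two by hypothesis, the last two by Definition~\ref{definition:rewriting-standard}), so that the substitutions above are legitimate and the invoked instances of~\eqref{rule:symmetry} and~\eqref{rule:transitivity} really are among the rules an \EPComplete atom set must satisfy, and that ``\AtomSet satisfies \Rule'' unwinds via Definition~\ref{definition:applicability} to the expected closure condition for these rules (which contain no existential variables). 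The reflexivity requirement on \EPComplete atom sets is not needed here beyond underpinning the well-definedness of $\Rewriting_\AtomSet$ itself.
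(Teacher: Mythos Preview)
Your proof is correct and follows essentially the same approach as the paper's: both derive $\EP(t, \Rewriting_\AtomSet(u)) \in \AtomSet$ and $\EP(u, \Rewriting_\AtomSet(t)) \in \AtomSet$ via symmetry and transitivity, then invoke the $\prec$-minimality clause in Definition~\ref{definition:rewriting-standard}. The only cosmetic difference is that the paper phrases this as a proof by contradiction (assuming $\Rewriting_\AtomSet(t) \neq \Rewriting_\AtomSet(u)$ and obtaining both $\Rewriting_\AtomSet(t) \prec \Rewriting_\AtomSet(u)$ and $\Rewriting_\AtomSet(u) \prec \Rewriting_\AtomSet(t)$), whereas you argue directly via antisymmetry of $\preceq$.
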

\begin{proof}
Proof by contradiction.
\begin{enumerate}
\item Let $\AtomSet$ be an \EPComplete atom set.
\item Assume that there are some terms $t, u \in \EntitiesIn{\Terms}{\AtomSet}$ with (a) $\EP(t, u) \in \AtomSet$ and (b) $\Rewriting_\AtomSet(t) \neq \Rewriting_\AtomSet(u)$.
\item By (2) and Definition~\ref{definition:rewriting-standard}: $\EP(t, \Rewriting_\AtomSet(t)), \EP(u, \Rewriting_\AtomSet(u)) \in \AtomSet$.
\item By (1-3) and Definition~\ref{definition:epcompleteness}: $\EP(t, \Rewriting_\AtomSet(u)), \EP(u, \Rewriting_\AtomSet(t)) \in \AtomSet$.
\item By (2.b) and (4): $\Rewriting_\AtomSet(u) \prec \Rewriting_\AtomSet(t)$ and $\Rewriting_\AtomSet(t) \prec \Rewriting_\AtomSet(u)$ with $\prec$ the strict total order over the set of terms introduced before Definition~\ref{definition:applicability}.
\item (5) results in a contradiction and hence, the assumption from (2) does not hold.
\end{enumerate}
\end{proof}

\begin{lemma}
\label{lemma:aux-2}
Consider some \EPComplete atom set \AtomSet and some term $t \in \EntitiesIn{\Terms}{\AtomSet}$.
If $t$ is in the range of $\Rewriting_\AtomSet$, then $\Rewriting_\AtomSet(t) = t$.
\end{lemma}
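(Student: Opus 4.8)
The plan is to derive the claim from Lemma~\ref{lemma:aux}. Spelling out the hypothesis, ``$t$ is in the range of $\Rewriting_\AtomSet$'' means there is some $s \in \EntitiesIn{\Terms}{\AtomSet}$ with $\Rewriting_\AtomSet(s) = t$. By the first item of Definition~\ref{definition:rewriting-standard} applied to $s$, the atom $\EP(s, \Rewriting_\AtomSet(s)) = \EP(s, t)$ occurs in \AtomSet. Since \AtomSet is \EPComplete, I would then invoke Lemma~\ref{lemma:aux} on this atom to obtain $\Rewriting_\AtomSet(s) = \Rewriting_\AtomSet(t)$, and combining this with $\Rewriting_\AtomSet(s) = t$ immediately gives $\Rewriting_\AtomSet(t) = t$, as desired.

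Should one prefer a self-contained argument that avoids Lemma~\ref{lemma:aux}, the alternative is to unfold the two items of Definition~\ref{definition:rewriting-standard} and use that, for every $u \in \EntitiesIn{\Terms}{\AtomSet}$, the term $\Rewriting_\AtomSet(u)$ is exactly the $\prec$-least element of the set $\{v \in \EntitiesIn{\Terms}{\AtomSet} \mid \EP(u, v) \in \AtomSet\}$, which is nonempty by Definition~\ref{definition:epcompleteness}. First, $\EP(t, t) \in \AtomSet$ by Definition~\ref{definition:epcompleteness}, whence $\Rewriting_\AtomSet(t) \preceq t$. Second, from $\EP(s, t) \in \AtomSet$ and $\EP(t, \Rewriting_\AtomSet(t)) \in \AtomSet$ together with the fact that \AtomSet satisfies the transitivity rule~\eqref{rule:transitivity}, one gets $\EP(s, \Rewriting_\AtomSet(t)) \in \AtomSet$; by $\prec$-minimality of $t = \Rewriting_\AtomSet(s)$ among the terms related by \EP to $s$, this forces $t \preceq \Rewriting_\AtomSet(t)$. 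Since $\prec$ is a strict total order, the two inequalities yield $\Rewriting_\AtomSet(t) = t$.

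I do not anticipate any genuine obstacle here: the statement merely says that the term $\Rewriting_\AtomSet(s)$ selected as the canonical representative of $s$ is its own canonical representative. The only step needing mild care is reading off from Definition~\ref{definition:rewriting-standard} that $\Rewriting_\AtomSet$ picks the $\prec$-minimal term related by \EP to its argument; in the self-contained version one additionally has to compose \EP-atoms, but only the reflexivity clause of Definition~\ref{definition:epcompleteness} and the transitivity rule~\eqref{rule:transitivity} are needed -- symmetry (rule~\eqref{rule:symmetry}) is not.
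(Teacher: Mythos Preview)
Your proposal is correct. Your first route, via Lemma~\ref{lemma:aux}, is genuinely different from the paper's argument and in fact slicker: the paper does \emph{not} invoke Lemma~\ref{lemma:aux} here but gives a standalone proof by contradiction. From $\Rewriting_\AtomSet(v) = t$ and $\Rewriting_\AtomSet(t) = u$ with $t \neq u$, the paper extracts $\EP(v,t), \EP(t,u) \in \AtomSet$, uses reflexivity and transitivity to get $\EP(t,t), \EP(v,u) \in \AtomSet$, and then the $\prec$-minimality clause of Definition~\ref{definition:rewriting-standard} applied at $t$ and at $v$ yields both $u \prec t$ and $t \prec u$, a contradiction. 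Your second, self-contained route is precisely this argument rephrased directly rather than by contradiction (including your correct observation that only reflexivity and transitivity are needed, not symmetry). What your first route buys is brevity and a clean factoring: once Lemma~\ref{lemma:aux} is available, idempotence of $\Rewriting_\AtomSet$ on its range is a one-line corollary. What the paper's standalone proof buys is independence from Lemma~\ref{lemma:aux}, at the cost of essentially redoing part of that lemma's work.
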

\begin{proof}
Proof by contradiction.
\begin{enumerate}
\item Let \AtomSet be an \EPComplete atom set.
\item Assume that there are some $t, u \in \EntitiesIn{\Terms}{\AtomSet}$ with (a) $t$ in the range of $\Rewriting_\AtomSet$, (b) $\Rewriting_\AtomSet(t) = u$, and (c) $t \neq u$.
\item By (2.a): there is some $v \in \EntitiesIn{\Terms}{\AtomSet}$ with $\Rewriting_\AtomSet(v) = t$.
\item By (2.b), (3), and Definition~\ref{definition:rewriting-standard}: $\EP(t, u), \EP(v, t) \in \AtomSet$.
\item By (1), (4), and Definition~\ref{definition:epcompleteness}: $\EP(t, t), \EP(v, u) \in \AtomSet$.
\item By (2.b), (2.c), (5), and Definition~\ref{definition:rewriting-standard}: $u \prec t$ with $\prec$ the strict total order over the set of terms introduced before Definition~\ref{definition:applicability}.
\item By (2.c), (3), (5), and Definition~\ref{definition:rewriting-standard}: $t \prec u$.
\item (6) and (7) result in a contradiction and hence, the assumption from (2) does not hold.
\end{enumerate}
\end{proof}

\theoremstandard*
\begin{proof}
Set-up for the proof.
\begin{enumerate}
\item Premise: the chase of \St{\RuleSet} terminates.
\item Let \FactSet be some fact set and let $\On = \Tuple{\RuleSet, \FactSet}$.
Then, $\St{\On} = \Tuple{\St{\RuleSet}, \FactSet}$ by Definition~\ref{definition:standard}.
\item By (1) and (2): the chase of $\St{\On}$ terminates.
\item By (3): the number of chase steps for $\St{\On}$ is finite.
\item Assume that, for any chase step \AtomSet of \On, there is an \EPComplete chase step $\AtomSetAux$ of \St{\On} with $[\AtomSetAux] = \AtomSet$.
\item By (4) and (5): the chase of $\On$ terminates.
\item By (2) and (6): the chase of $\RuleSet$ terminates.
\item We show that the assumption in (3) holds with the following inductive argument.
\end{enumerate}
\smallskip

\noindent Base case:
\begin{enumerate}
\item Let \AtomSet be the first element in some chase sequence of \On.
\item By (1) and Definition~\ref{definition:chase}: $\AtomSet = \FactSet$.
\item By (2) and Definition~\ref{definition:chase}: \AtomSet is the first element of every chase sequence of \St{\On} and hence, \AtomSet is a chase step of \St{\On}.
\item Let $\AtomSetAux = \AtomSet \cup \{\EP(t, t) \mid t \in \EntitiesIn{\Terms}{\AtomSet}\}$.
\item By (3) and (4): $\AtomSetAux$ is a chase step for \St{\On}, as it can be obtained by exhaustively applying the rules of type \eqref{rule:reflexivity} in \St{\RuleSet} to \AtomSetAux.
Note that, $\EntitiesIn{\Predicates}{\FactSet} \subseteq \EntitiesIn{\Predicates}{\RuleSet}$ by the definition of an ontology.
Hence, for every $P \in \EntitiesIn{\Predicates}{\FactSet}$ there is a rule of type \eqref{rule:reflexivity} in $\St{\RuleSet}$ instantiated for the predicate $P$.
\item By Definition~\ref{definition:standard}: the set \FactSet does not contain facts over \EP.
\item By (2), (4), (6), and Definition~\ref{definition:epcompleteness}: the atom set \AtomSetAux is \EPComplete.
\item By (2), (4), (6), and Definition~\ref{definition:rewriting-standard}: $\Rewriting_\AtomSetAux$ is identity function over the set \EntitiesIn{\Terms}{\AtomSetAux}.
Thefore, $[\AtomSetAux] =  \AtomSet$.
\end{enumerate}
\smallskip

\noindent Inductive step:
\begin{enumerate}[1.]
\item Let \AtomSet be some chase step of \On that is not the first element in any chase sequence of \On.
\item By (1): there is some chase step $\AtomSet'$ that is the predecessor of \AtomSet in some chase sequence of \On.
\item By (2) and induction hypothesis: there is an \EPComplete chase step $\AtomSetAux''$ of \St{\On} with $[\AtomSetAux''] = \AtomSet'$.
\item Let $\AtomSetAux_0, \ldots, \AtomSetAux_n$ be some sequence of atom sets that results from exhaustively applying the rules of type (\ref{rule:replacement}) in \St{\RuleSet} to $\AtomSetAux''$.
That is, $\AtomSetAux_0, \ldots, \AtomSetAux_n$ is a sequence such that
\begin{enumerate}[a.]
\item $\AtomSetAux_0 = \AtomSetAux''$;
\item for all $i \in \{1, \ldots, n\}$, there is some rule $\Rule_i \in \St{\RuleSet}$ of type (\ref{rule:replacement}) and some substitution $\Subs_i$ such that $\AtomSetAux_i$ is the application of $\Tuple{\Rule_i, \Subs_i}$ on $\AtomSetAux_{i-1}$; and
\item $\AtomSetAux_n$ satisfies all of the rules of type (\ref{rule:replacement}) in \St{\RuleSet}.
\end{enumerate}
\item By Definition~\ref{definition:standard}: rules of type (\ref{rule:replacement}) do not contain existentially quantified variables.
Moreover, these do not contain the predicate \EP in the head.
\item By (4.b) and (5): for all $i \in \{1, \ldots, n\}$, if $\AtomSetAux_{i-1}$ is \EPComplete, then so is $\AtomSetAux_i$.
\item By (3), (4.a), and (6): $\AtomSetAux_i$ is \EPComplete for all $i \in \{0, \ldots, n\}$. 
\item We show via induction that $[\AtomSetAux_i] = \AtomSet'$ for all $i \in \{0, \ldots, n\}$.
\begin{itemize}
\item Base case: $[\AtomSetAux_0] = \AtomSet'$ by (3) and (4.a).
\item Inductive step:
\begin{enumerate}[a.]
\item Let $i \in \{1, \ldots, n\}$.
\item By (7): $\AtomSetAux_{i-1}$ and $\AtomSetAux_i$ are \EPComplete.
Hence, the atom sets $[\AtomSetAux_{i-1}]$ and $[\AtomSetAux_i]$ are well defined and can be used across the following argument.
\item By induction hypothesis: $[\AtomSetAux_{i-1}] = \AtomSet'$.
\item By (4.b): $\Rule_i$ is of the form $P(x_1, \ldots, x_j, \ldots, x_m) \wedge \EP(x_j, x_j') \to P(x_1, \ldots, x_j', \ldots, x_m)$ with $P \neq \mathop{\EP}$ an $m$-ary predicate and $j \in \{1, \ldots, m\}$.
\item By (4.b) and (d): $P(x_1, \ShortDots, x_m)\Subs_i, \EP(x_j, x_j')\Subs_i \in \AtomSetAux_{i-1}$.
\item By (5): $\Rewriting_{\AtomSetAux_i} = \Rewriting_{\AtomSetAux_{i-1}}$.
\item By (e), (f), and Lemma~\ref{lemma:aux}: $\Rewriting_{\AtomSetAux_i}(\Subs_i(x_j)) = \Rewriting_{\AtomSetAux_i}(\Subs_i(x_j'))$.
\item By (e), (f), and (g): $((P(x_1, \ldots, x_j', \ldots, x_m)\Subs_i)\Rewriting_{\AtomSetAux_i}  \in [\AtomSetAux_{i-1}]$.
\item By (4.b) and (d): $\AtomSetAux_i = \AtomSetAux_{i-1} \cup \{P(x_1, \ldots, x_j', \ldots, x_m)\Subs_i\}$
\item By (f), (h), (i), and Definition~\ref{definition:rewriting-standard}: $[\AtomSetAux_i] = [\AtomSetAux_{i-1}] \cup \{((P(x_1, \ldots, x_j', \ldots, x_m)\Subs_i)\Rewriting_{\AtomSetAux_i}\} = [\AtomSetAux_{i-1}]$.
\item By (c) and (j): $[\AtomSetAux_i] = \AtomSet'$.
\end{enumerate}
\end{itemize}
\item Let $\AtomSetAux' = \AtomSetAux_n$.
Note that, 
\begin{enumerate}[a.]
\item $\AtomSetAux'$ is \EPComplete by (7),
\item $[\AtomSetAux'] = \AtomSet'$ by (8), and
\item $\AtomSetAux'$ satisfies all of the rules of type \eqref{rule:replacement} in \RuleSet by (4.c).
\end{enumerate}
\item We show that $\AtomSet' \subseteq \AtomSetAux'$ by contradiction.
\begin{enumerate}[a.]
\item Assume that there is some fact $P(t_1, \ldots, t_m) \in \AtomSet'$ with $P(t_1, \ldots, t_m) \notin \AtomSetAux'$.
\item By (a) and (9.b): there is a fact of the form $P(u_1, \ldots, u_m) \in \AtomSetAux'$ with $[u_1 / t_1, \ldots, u_n / t_m] \subseteq \Rewriting_{\AtomSetAux'}$.
\item By (b) and Definition~\ref{definition:rewriting-standard}: $\EP(u_i, t_i) \in \AtomSetAux'$ for all $i \in \{1, \ldots, m\}$.
\item By (9.c): $\AtomSetAux'$ satisfies $P(x_1, \ShortDots, x_m) \wedge \EP(x_j, x_j') \to P(x_1, \ldots, x_j', \ldots, x_m)$ for all $j \in \{1, \ldots, m\}$.
\item By (b), (c), and (d): $P(t_1, \ldots, t_m) \in \AtomSetAux'$.
\item (a) and (e) result in a contradiction and hence, the assumption from (a) does not hold.
\end{enumerate}
\item By (2): there is a rule $\Rule \in \RuleSet$ and a substitution \Subs such that
\begin{enumerate}[a.]
\item \Subs is defined for all of the universally and none of the existentially quantified variables in \Rule,
\item $\Tuple{\Rule, \Subs}$ is applicable to $\AtomSet'$,
\item $\AtomSet = \Application{\AtomSet'}{\Rule}{\Subs}$, and
\item $\Body\Subs \subseteq \AtomSet'$ with \Body the body of \Rule.
\end{enumerate}
\item By (10) and (11.d): $\Body\Subs \subseteq \AtomSetAux'$.
\item We consider two cases (T) and (E), depending on whether the rule \Rule is a TGD or an EGD.
\begin{itemize}
\item[T.] Assume that \Rule is a TGD.
That is, \Rule is a rule of the form $\Body \to \exists \Vw . \Head \in \RuleSet$.
\begin{enumerate}[a.]
\item By (11) and Definition~\ref{definition:standard}: $\Rule \in \St{\RuleSet}$.
\item We show that \Tuple{\Rule, \Subs} is applicable to $\AtomSetAux'$ by contradiction.
\begin{enumerate}[I.]
\item Assume that $\Tuple{\Rule, \Subs}$ is not applicable to $\AtomSetAux'$.
\item By (I), (12), and Definition~\ref{definition:applicability}: there is a substitution $\SubsAux \supseteq \Subs$ with $\Head\SubsAux \subseteq \AtomSetAux'$.
\item By (II) and Lemma~\ref{lemma:aux-2}: $\Subs \subseteq (\Rewriting_{\AtomSetAux'} \circ \SubsAux)$.\footnote{The expression $\Rewriting_{\AtomSetAux'} \circ \SubsAux$ refers to the function such that $(\Rewriting_{\AtomSetAux'} \circ \SubsAux)(t) = \Rewriting_{\AtomSetAux'} (\SubsAux(t))$ for all $t$ in the domain of $\SubsAux$.}
Note that all of the terms in the range of \Subs are in $\AtomSet'$ and hence, these are also in the range of $\Rewriting_{\AtomSetAux'}$ by (9.b).
\item By (9.b) and (II): $\Head(\Rewriting_{\AtomSetAux'} \circ \SubsAux) \subseteq \AtomSet'$.
\item By (11.a) and (11.b): $\Head\SubsAux \not\subseteq \AtomSet'$ for all $\SubsAux \supseteq \Subs$.
\item (III), (IV), and (V) result in a contradiction and hence, the assumption from (I) does not hold.
\end{enumerate}
\item Let $\AtomSetAux = \Application{\AtomSetAux'}{\Rule}{\Subs} \cup \{\EP(t, t) \mid t \in \EntitiesIn{\Terms}{\Application{\AtomSetAux'}{\Rule}{\Subs}}\}$.
\item By (a), (b), and (c): \AtomSetAux is a chase step of \St{\On} as it can be obtained by exhaustively applying the rules of type \eqref{rule:reflexivity} in \St{\RuleSet} to the chase step \Application{\AtomSetAux'}{\Rule}{\Subs}.
\item By (T): the conjunction \Head does not contain the predicate \EP.
\item By (9.a), (c), (e), and Definition~\ref{definition:epcompleteness}: \AtomSetAux is \EPComplete.
\item By (e): $\Rewriting_\AtomSetAux = \Rewriting_{\AtomSetAux'} \cup \{\EP(t, t) \mid t \in \EntitiesIn{\Terms}{\AtomSetAux \setminus \AtomSetAux'}\}$.
\item By (9.b), (11.c), (c), (g), and Definition~\ref{definition:rewriting-standard}: $[\AtomSetAux] = \AtomSet$.
\end{enumerate}
\item[E.] Assume that \Rule is an EGD.
That is, \Rule is a rule of the of the form $\Body \to x \approx y$.
\begin{enumerate}[a.]
\item By (11) and Definition~\ref{definition:standard}: $\Rule' = \Body \to \EP(x, y) \in \St{\RuleSet}$.
\item We show that \Tuple{\Rule', \Subs} is applicable to $\AtomSetAux'$ by contradiction.
\begin{enumerate}[I.]
\item Assume that \Tuple{\Rule', \Subs} is not applicable to $\AtomSetAux'$.
\item By (12) and (I): $\EP(x, y)\Subs \in \AtomSetAux'$.
\item By (II) and Definition~\ref{definition:rewriting-standard}: $\Rewriting_{\AtomSetAux'}(\Subs(x)) = \Rewriting_{\AtomSetAux'}(\Subs(y))$.
\item By (11.b): $\Subs(x) \neq \Subs(y)$, and both $\Subs(x)$ and $\Subs(y)$ occur in $\AtomSet'$.
\item By (9.b), (IV), and Lemma~\ref{lemma:aux-2}: $\Rewriting_{\AtomSetAux'}(\Subs(x)) = \Subs(x)$ and $\Rewriting_{\AtomSetAux'}(\Subs(y)) = \Subs(y)$.
\item By (III) and (V): $\Subs(x) = \Subs(y)$.
\item (IV) and (VI) result in a contradiction and hence, the assumption from (I) does not hold.
\end{enumerate}
\item By (a) and (b): $\Application{\AtomSetAux'}{\Rule'}{\Subs} = \AtomSetAux' \cup \{\EP(\Subs(x), \Subs(y))\}$ is a chase step of \St{\On}.
\item Let $\AtomSetAux$ be the chase step of $\St{\On}$ that results from exhaustively applying the rules of type (\ref{rule:reflexivity}-\ref{rule:transitivity}) in \St{\RuleSet} to $\Application{\AtomSetAux'}{\Rule'}{\Subs}$.
\item By (d) and Definition~\ref{definition:epcompleteness}: the atom set $\AtomSetAux$ is \EPComplete.
\item By (9.b), (11.c), (d) and Definition~\ref{definition:rewriting-standard}: $[\AtomSetAux] = \AtomSet$.
\end{enumerate}
\end{itemize}
\end{enumerate}
\end{proof}

\theoremsingularisation*
\begin{proof}
Set-up for the proof.
\begin{enumerate}
\item Assume that there is some rule set $\RuleSet' \in \Sing{\RuleSet}$ for which the chase terminates.
\item Let \FactSet be some fact set, let $\On = \Tuple{\RuleSet, \FactSet}$, let $\On' = \Tuple{\RuleSet', \FactSet}$, and let $\AtomSet_0, \AtomSet_1, \ldots$ be some chase sequence of \On.
\item By (1) and (2): the chase of $\On'$ terminates.
\item Assume that there is a sequence of atom sets $\AtomSetAux_0, \AtomSetAux_1, \ldots$ and a sequence of rewritings $\Rewriting_0, \Rewriting_1, \ldots$ such that, for all $i \in \{0, \ldots, n\}$, 
\begin{enumerate}[a.]
\item $\AtomSetAux_i$ is a chase step of $\On'$,
\item $\AtomSetAux_i\Rewriting_i \setminus \{\EP(t, u) \mid t, u \in \Terms\} = \AtomSet_i$, and
\item $\EP(t, u) \in \AtomSetAux_i$ for all $t, u \in \Terms$ with $\Rewriting_i(t) = \Rewriting_i(u)$.
\end{enumerate}
Note that $\AtomSetAux_0, \AtomSetAux_1, \ldots$ may not be a chase sequence for $\On'$.
\item By (3): the number of chase steps of $\On'$ is finite.
\item By (3.a) and (5): the sequence $\AtomSetAux_0, \AtomSetAux_1, \ldots$ is finite.
\item By (2), (4.b), and (6): the sequence $\AtomSet_0, \AtomSet_1, \ldots$ is finite.
\item By (2) and (7): the chase of $\On$ terminates.
\item By (2) and (8): the chase of $\RuleSet$ terminates.
\item We show that the assumption in (3) holds with the following inductive argument.
\end{enumerate}
\smallskip

\noindent Base case:
\begin{enumerate}
\item Let $\AtomSetAux_0 = \FactSet \cup \{\EP(t, t) \mid t \in \EntitiesIn{\Terms}{\FactSet}\}$.
\item Let $\Rewriting_0$ be the identity function over $\EntitiesIn{\Terms}{\AtomSetAux_0}$.
\item By Definition~\ref{definition:chase}: \FactSet is a chase step of $\On'$.
\item By Definition~\ref{definition:chase}: $\FactSet = \AtomSet_0$.
\item By the definition of an ontology: $\EntitiesIn{\Predicates}{\FactSet} \subseteq \EntitiesIn{\Predicates}{\RuleSet}$.
\item By (5) and Definition~\ref{definition:singularisation}: for every $P \in \EntitiesIn{\Predicates}{\FactSet}$, there is a rule of type \eqref{rule:reflexivity} in $\RuleSet'$.
\item By (1), (3), and (6): $\AtomSetAux_0$ is a chase step of $\On'$ as it can be obtained by exhaustively applying the rules of type \eqref{rule:reflexivity} in $\RuleSet'$ to \FactSet.
\item By (1), (2), and (4): $\AtomSetAux_0\Rewriting_0 \setminus \{\EP(t, u) \mid t, u \in \Terms\} = \AtomSet_0$.
\item By (1) and (2): $\EP(t, u) \in \AtomSetAux_0$ for all $t, u \in \Terms$ in $\AtomSetAux_0$ with $\Rewriting_0(t) = \Rewriting_0(u)$.
\end{enumerate}
\smallskip

\noindent Inductive step:
\begin{enumerate}
\item Let $i \geq 1$.
\item By induction hypothesis: there is an atom set $\AtomSetAux_{i-1}$ and a rewriting $\Rewriting_{i-1}$ such that
\begin{enumerate}[a.]
\item $\AtomSetAux_{i-1}$ is a chase step of $\On'$,
\item $\AtomSetAux_{i-1}\Rewriting_{i-1} \setminus \{\EP(t, u) \mid t, u \in \Terms\} = \AtomSet_{i-1}$, and
\item $\EP(t, u) \in \AtomSetAux_{i-1}$ for all $t, u \in \Terms$ with $\Rewriting_{i-1}(t) = \Rewriting_{i-1}(u)$.
\end{enumerate}
\item Since $\AtomSet_0, \AtomSet_1, \ldots$ is a chase sequence of \On, there is some rule $\Rule = \Body[\Vx] \to H \in \RuleSet$ with $\Vx = x^1, \ldots, x^n$ and some substitution \Subs such that
\begin{enumerate}[a.]
\item $\Tuple{\Rule, \Subs}$ is applicable to $\AtomSet_{i-1}$, and
\item $\AtomSet_i$ is the application of $\Tuple{\Rule, \Subs}$ on $\AtomSet_{i-1}$.
\end{enumerate}
\item By (3.a): $\Body\Subs \subseteq \AtomSet_{i-1}$.
\item By (3) and Definition~\ref{definition:singularisation}: since $\RuleSet' \in \Sing{\RuleSet}$, there is some rule $\Body'[x^1, \Vx^1, \ldots, x^n, \Vx^n] \to H \in \RuleSet'$ such that $\Body' \in \Sing{\Body}$ and, for all $j \in \{1, \ldots, n\}$, the list $\Vx^j$ contains all of the variables of the form $x^j_k \in \EntitiesIn{\Variables}{\Body'}$ with $k \geq 1$.
\item By (5) and Definition~\ref{definition:singularisation}: every $x \in \EntitiesIn{\Variables}{\Body'}$ occurs in one atom in $\Body'$ defined over a predicate $P \neq \EP$.\footnote{These are the variables that are added when we compute the singularisation of a rule.}
\item By (2.b), (4), (5), and (6): there is some substitution $\Subs'$ such that
\begin{enumerate}[a.]
\item $\Rewriting_{i-1}(\Subs'(y)) = \Subs(x^j)$ for all $j \in \{1, \ldots, n\}$ and $y \in \Vx^j$,
\item $\Rewriting_{i-1}(\Subs'(x^j)) = \Subs(x^j)$ for all $j \in \{1, \ldots, n\}$, and
\item $\Fact\Subs' \subseteq \AtomSetAux_{i-1}$ for all $\Fact \in \Body'$ that are not defined over the predicate $\EP$.
\end{enumerate}
\item By (2.c), (7.a), and (7.b): $\EP(\Subs'(x^j), \Subs'(y)) \in \AtomSetAux_{i-1}$ for all $j \in \{1, \ldots, n\}$ and $y \in \Vx^j$.
\item By (7.c), (8), and Definition~\ref{definition:singularisation}: $\Body'\Subs' \subseteq \AtomSetAux_{i-1}$.
\item We consider two different cases (T) and (E), depending on whether $\Rule$ is a TGD or an EGD.
\begin{itemize}
\item[T.] Let $\Rule$ be a TGD.
That is, $\Rule$ is a formula of the form $\Body[\Vx] \to \exists \Vw . \Head[\Vy, \Vw]$ with $\Vy \subseteq \Vx$.
\begin{enumerate}[a.]
\item We show that $\Tuple{\Rule', \Subs'}$ is applicable to $\AtomSetAux_{i-1}$.
\begin{enumerate}[I.]
\item Suppose for a contradiction that $\Tuple{\Rule', \Subs'}$ is not applicable to $\AtomSetAux_{i-1}$.
\item By (9), (I), and Definition~\ref{definition:applicability}: there is some substitution $\SubsAux' \supseteq \Subs'$ with $\Head\SubsAux' \subseteq \AtomSetAux_{i-1}$.
\item By (7.b) and (II): $\Subs \subseteq (\Rewriting_{i-1} \circ \SubsAux')$.
\item By (2.b) and (II): $\Head (\Rewriting_{i-1} \circ \SubsAux') \subseteq \AtomSet_{i-1}$.
\item By (3.a) and Definition~\ref{definition:applicability}: $\Head\SubsAux \not\subseteq \AtomSet_{i-1}$ for all $\SubsAux \supseteq \Subs$.
\item (III), (IV), and (V) result in a contradiction and hence, we conclude that (i) does not hold.
\end{enumerate}
\item Let $\AtomSetAux_i$ be the minimal set containing the application of $\Tuple{\Rule', \Subs'}$ on $\AtomSetAux_{i-1}$ and $\{\EP(t, t) \mid t \in \EntitiesIn{\Terms}{\AtomSetAux_i}\}$.
\item Let $\Rewriting_i \supseteq \Rewriting_{i-1}$ be the rewriting such that $\Rewriting_i(f_w(\Subs'(\Vy))) = f_w(\Subs(\Vy))$ for all $w \in \Vw$.
\item By (2.a), (5), (a), and (b): the set $\AtomSetAux_i$ is a chase step of $\On'$  as it can be obtained by applying the tuple \Tuple{\Rule', \Subs'} to $\AtomSetAux_{i-1}$ and then exhaustively applying the rules of type \eqref{rule:reflexivity} in $\RuleSet'$.
\item By (2.b), (b), and (c): $\AtomSetAux_i\Rewriting_i \setminus \{\EP(t, u) \mid t, u \in \Terms\} = \AtomSet_i$.
\item By (T): the predicate \EP does not occur in $\Head$.
\item By (2.c), (b), (c), and (f): $\EP(t, u) \in \AtomSetAux_i$ for all $t, u \in \Terms$ with $\Rewriting_i(t) = \Rewriting_i(u)$.
\end{enumerate}
\item[E.] The rule $\Rule$ is an EGD.
That is, $\Rule$ is a formula of the form $\Body[\Vx] \to x \approx y$ with $x, y \in \Vx$.
We consider two possible cases depending on whether $\Tuple{\Rule', \Subs'}$ is applicable to $\AtomSetAux_{i-1}$.
\begin{itemize}
\item Assume that the tuple \Tuple{\Rule', \Subs'} is applicable to $\AtomSetAux_{i-1}$.
\begin{enumerate}[a.]
\item Let $\AtomSetAux_i$ be the atom set that results from applying $\Tuple{\Rule', \Subs'}$ to $\AtomSetAux_{i-1}$, and then exhaustively applying the rules of \eqref{rule:symmetry} and \eqref{rule:transitivity} as well as the rules of type \eqref{rule:replacement} in $\RuleSet'$.
\item Let $\Rewriting_i$ be the rewriting defined as follows.
\begin{enumerate}
\item If $\Subs(x) \prec \Subs(y)$, then $\Rewriting_i$ results from replacing every pair $\langle t, \Subs(y) \rangle \in \Rewriting_{i-1}$ with $\langle t, \Subs(x) \rangle \in \Rewriting_{i-1}$.
\item Otherwise, $\Rewriting_i$ results from replacing every pair $\langle t, \Subs(x) \rangle \in \Rewriting_{i-1}$ with $\langle t, \Subs(y) \rangle \in \Rewriting_{i-1}$.
\end{enumerate}
\item By (5), (2.a), (a), and Definition~\ref{definition:singularisation}: $\AtomSetAux_i$ is a chase step of $\On'$.
\item By (2.b), (a), and (b): $\AtomSetAux_i\Rewriting_i \setminus \{\EP(t, u) \mid t, u \in \Terms\} = \AtomSet_i$,
\item By (2.c), (a), and (b): $\EP(t, u) \in \AtomSetAux_{i-1}$ for all $t, u \in \Terms$ with $\Rewriting_{i-1}(t) =  \Rewriting_{i-1}(u)$.
\end{enumerate}
\item Assume that the tuple \Tuple{\Rule', \Subs'} is not applicable to $\AtomSetAux_{i-1}$.
\begin{enumerate}[a.]
\item Let $\AtomSetAux_i = \AtomSetAux_{i-1}$.
\item The rest of the proof is analogous to the previous case.
\end{enumerate}
\end{itemize}
\end{itemize}
\end{enumerate}
\end{proof}

\end{list}
\end{tr}

\end{document}